\begin{document}

%

%

\twocolumn[

\aistatstitle{Calibration Error for Heterogeneous Treatment Effects}

\aistatsauthor{ Yizhe Xu \And Steve Yadlowsky }

\aistatsaddress{ Stanford Center for Biomedical Informatics\\ Research, Stanford University \And  Google Research, Brain Team } ]

\begin{abstract}
  Recently, many researchers have advanced data-driven methods for modeling heterogeneous treatment effects (HTEs). Even still, estimation of HTEs is a difficult task---these methods frequently over- or under-estimate the treatment effects, leading to poor calibration of the resulting models. However, while many methods exist for evaluating the calibration of prediction and classification models, formal approaches to assess the calibration of HTE models are limited to the calibration slope. In this paper, we define an analogue of the \smash{($\ell_2$)} expected calibration error for HTEs, and propose a robust estimator. Our approach is motivated by doubly robust treatment effect estimators, making it unbiased, and resilient to confounding, overfitting, and high-dimensionality issues. Furthermore, our method is straightforward to adapt to many structures under which treatment effects can be identified, including randomized trials, observational studies, and survival analysis. We illustrate how to use our proposed metric to evaluate the calibration of learned HTE models through the application to the CRITEO-UPLIFT Trial.
\end{abstract}

\section{Introduction}
Following the recent advances in building prediction models that effectively minimize a loss function, the focus of the field of machine learning has shifted towards studying methods and metrics better aligned with downstream tasks. In this shift, there is a renewed appreciation for building models with calibrated predictions \citep{NaeiniCoHa15,DusenberryTrChKeNiJeHeDa20} that has spurred a wave of new methods \citep{GuoPlSuWe17,KuleshovFeEr18} and evaluation metrics such as the expected calibration error (ECE) \citep{NaeiniCoHa15,NixonDuZhJeTr19, YadlowskyBaTi19}, along with emphasis \citep{StevensPo20,GuoPlSuWe17} of existing metrics for calibration, such as the calibration slope \citep{Cox58,SteyerbergBoHoEiHa04}. At the same time, there is a growing literature on causal inference methods aimed at predicting the effect of an intervention at an individual level \citep{AtheyIm16,ShalitJoSo17,nie2020quasioracle, ShiBlVe19, Kennedy20}, when there are features predictive of heterogeneous treatment effects (HTEs). In this work, we connect the two, by defining an analogue of the ($\ell_2)$ ECE for HTE predictions, and constructing an estimator of it that can be used in randomized trials and observational studies, alike.

The field of medicine, in particular, has a strong interest in understanding heterogeneity in treatment effects. With a timely appreciation for the diversity of patients encountered in clinical settings, clinicians understand that patients cannot be treated homogeneously. For example, many cancer therapies are designed using the idea of personalized medicine in which treatment is determined based on key genetic alterations \citep{Dagogo-Jack18}. 
As oncologists develop their understandings of tumor dynamics, treatment heterogeneity will also be crucial to designing durable cancer therapies that account for drug resistance. 

Another significance of evaluating HTEs is to inform individualized treatment rules based on the trade-off between treatment benefits and treatment harms. For example, in the Systolic Blood Pressure Intervention Trial (SPRINT), the intensive blood pressure therapy showed significant treatment benefits on reducing risks of cardiovascular disease events and treatment harms on elevating risks of serious adverse events at a population level \citep{SPRINT2015}. Recent analyses of the data found evidence suggesting that both treatment benefits and harms may vary across patients \citep{Basu2017, Bress2021}, although it remains unclear how robustly predictable they are. Thus, predicting such heterogeneity in treatment effects could help to determine the optimal treatment rule and inform the allocation of limited medical resources. 

HTEs also frequently come up in the marketing literature, where they are referred to as uplift modeling \citep{RadcliffeSu99, Radcliffe07}. In a digital marketing campaign, uplift models can be used to select which potential customers to focus on advertising to \citep{Radcliffe07}, or which current users should be offered promotions to reduce churn \citep{Ascarza18}. In these applications, calibration of the HTE estimates is important, as they allow the marketer to understand the magnitude of the added value of advertising to a particular customer.

Estimation of HTEs is a challenging task, as the heterogeneity in treatment effects is usually small compared to main effects. A spectrum of novel machine learning methods has been proposed to effectively estimate HTEs, such as causal forest \citep{WagerAt18}, gradient boosting machine \citep{Friedman2001}, deep learning \citep{ShalitJoSo17,ShiBlVe19}, and Bayesian additive regression trees \citep{Chipman2010}. Although these methods have shown high accuracy in simulation studies, they may produce differential performance in real applications due to the varied nature and characteristics of each dataset. To identify reliable methods in practice, it is important to evaluate HTE model performance on real data. Metrics such as rank-weighted average treatment effect (RATE) metrics \citep{YadlowskyFlShBrWa21} offer an assessment of the discriminative ability, but few metrics exist for calibration. \citet{ChernozhukovDeDuFe18} define an analogue of the calibration slope, called the Best Linear Predictor of the CATE on the ML proxy predictor (the coefficient of the proxy is the calibration slope). However, we are not aware of any work defining or estimating an analogue of the ECE for HTEs.

Compared to calibration metrics for classification or risk prediction models, it is more difficult to develop such a metric for HTEs as they are defined using counterfactual outcomes, yet, in real data, we only observe one outcome under either the treatment or control arm, but not both \citep{HernanRo20}. In a randomized trial, a na\"{i}ve estimate of the ECE for HTEs can quickly be adapted from the ECE metric for predictive models in machine learning \citep{NaeiniCoHa15,NixonDuZhJeTr19}, by binning observations in each arm of the trial and computing the error between the observed difference in means between the two arms and expected difference according to the predictions. However, a few challenges arise: First, the sample size of randomized trials is often selected so that the overall average treatment effect, and possibly the effect in a few important subgroups, can be estimated. This limits the number of bins that can be used to estimate calibration, introducing bias from coarse binning. Second, in observational studies, there is confounding that introduces bias if not adjusted correctly.

In this paper, we propose a calibration metric that accounts for confounding to assess HTE model calibration in observational studies. We show how to address the issues raised above, by reducing the variance of the results in randomized trials, mitigating the bias with a robust estimator, and showing how to use our method with a variety of approaches to adjusting for confounding (including doubly robust estimators). Our approach builds on the techniques developed by \citet{YadlowskyFlShBrWa21} for RATE metrics, which we use to create an analogue to measure the calibration of HTEs. We demonstrate through simulations that our estimator is more accurate and robust than the na\"{i}ve approach mentioned above over a broad range of difficult estimation settings.
Code for our estimator and simulations can be found at \url{https://github.com/CrystalXuR/Calibration-Metric-HTE}.

\section{Defining and Estimating Calibration Error}\label{method}
Consider the problem of learning a heterogeneous treatment effect model in the form of the conditional average treatment effect (CATE), $\tau(x) = \E[Y(1) - Y(0) \mid X=x]$ of a binary treatment $W \in \{0, 1\}$ on a scalar outcome $Y \in \R$, in the presence of fully observed confounding variables $X \in \R^d$ that affect treatment assignment and the outcome.

When fitting the CATE with data in a sufficiently flexible way, i.e., using machine learning approaches, it is common to overfit the data. Frequently, the impact of such overfitting is to overestimate the magnitude of the effects, which leads to a phenomenon known as miscalibration. The calibration function of an estimator $\what{\tau}(\cdot)$ of CATE $\tau(\cdot)$ is
\begin{equation*}
    \gamma_{\what{\tau}}(\delta) = \E[Y(1) - Y(0) \mid \what{\tau}(X) = \delta],
\end{equation*}
where $\what{\tau}(X)$ is the \emph{predicted}  CATE, and $\gamma_{\what{\tau}}(\delta)$ is the \emph{observed} CATE in a prospective group of patients selected so that $\what{\tau}(X) = \delta$. Therefore, we say that the estimator $\what{\tau}(\cdot)$ is \emph{mis-calibrated} if $\gamma_{\what{\tau}}(\delta) \not= \delta$.
Note that $\gamma_\tau(\delta) = \E[Y(1) - Y(0) \mid \tau(X) = \delta] = \delta$, motivating the intuition that a better-calibrated estimate $\what{\tau}(\cdot)$ might be a better estimate of $\tau(\cdot)$.

Because of the nature of estimation, the learned CATE $\what{\tau}$ is almost always miscalibrated to some extent. Therefore, it is useful to summarize the calibration error $(\gamma_{\what{\tau}}(\delta) - \delta)$ in a simple one dimensional metric. Following it's popularity in the machine learning literature \citep{NixonDuZhJeTr19, YadlowskyBaTi19}, we study the $\ell_p$-Expected Calibration Error for predictors of Treatment Heterogeneity ($\ell_p$-ECETH), defined as
\begin{equation*}
    \theta = \E\left[ \left|\gamma_{\what{\tau}}(\Delta) - \Delta\right|^p \right],
\end{equation*}
where $\Delta = \what\tau(X)$, and the expectation is over $\Delta$, i.e., implicitly over $X$. To assess the ECETH for a given estimated CATE model $\what{\tau}(\cdot)$, we need to find a way to estimate it from data. In this work, we focus on $p=2$, where we can give effective de-biasing procedures and better quantify the statistical behavior of our estimator, allowing for better interpretation and applications to hypothesis testing.

\subsection{Calibration Function Estimation}
\label{sec:cal-fn-est}
We begin by describing a method to estimate the calibration function $\gamma_{\what{\tau}}$. Given the independent and identically distributed sample of observations $(Y_i, W_i, X_i)_{i=1}^n$ that is used to learn a CATE model for $\what\tau(\cdot)$, we can estimate the calibration function $\gamma_{\what\tau}(\delta)$ by taking advantage of carefully constructed ``scores'' that are a surrogate for the CATE \citep{Kennedy20, YadlowskyFlShBrWa21, AtheyWa21}. Following \citet{YadlowskyFlShBrWa21}, let $\Gamma_i$ be some function of $(X_i, W_i, Y_i)$ such that
\begin{equation}
    \E[ \Gamma_i \mid X_i=x] = \tau(x).
    \label{eq:score-cond}
\end{equation}
Notice that $\E[\Gamma_i \mid \what{\tau}(X_i) = \delta] = \gamma_{\what{\tau}}(\delta)$, and because $\Gamma_i$ and $\what{\tau}(X_i)$ are functions of observable random variables, estimating this conditional expectation is feasible. Indeed, this is simply a $1$-dimensional nonparametric regression that can be done efficiently under very mild assumptions on the true calibration function $\gamma_{\what{\tau}}(\delta)$. One commonly used regression model in the calibration literature is a histogram model, where the range of prediction values $R$ is partitioned in to $K$ equally sized bins $(I_1, R_1), \dots, (I_K, R_K)$, such the $|I_k|$ indices the number of predictions in bin $I_k$ that satisfy $\Delta_i \in R_k$, and the estimate in bin $k$ is
\begin{equation}
    \what{\gamma}_{\what{\tau}}(\delta) = \frac{1}{|I_k|}\sum_{i \in I_k} \Gamma_i,
    \label{eq:gamma-hat}
\end{equation}
for any $\delta \in R_k$.
If one is confident that the calibration function is monotone, then one could apply isotonic regression as done by \citet{RoelofsCaShMo20}, instead.
Both approaches are straightforward, but requires finding a score $\Gamma_i$ that satisfies the condition~\eqref{eq:score-cond}.

The conditions on the scores $\Gamma_i$ allow $\var(\Gamma_i) > 0$, so that one does not \emph{need} to estimate the CATE $\tau(\cdot)$ to derive a score $\Gamma_i$ that can be used in our procedure. This avoids a circular scenario where the CATE must be estimated to evaluate the CATE. Below, we give examples of such scores, including some for randomized trials that avoid estimation entirely (or are entirely robust to estimation error).

\subsubsection{Choice of Scores}\label{sec:scores}
The choice of a score depends on available data source. If the data are from a completely randomized trial with treated fraction $\pi$, then the inverse propensity weighted (IPW) score
\begin{equation}
    \Gamma_i^{\mathrm{ipw}} = \frac{W_i}{\pi}Y_i - \frac{1 - W_i}{1 - \pi}Y_i
    \label{eq:score-ipw}
\end{equation}
satisfies~\eqref{eq:score-cond} exactly. However, one can reduce the variance of $\Gamma_i$ using a prediction model $\what{\mu}(x, w)$ that approximates $\mu(x, w) = \E[Y \mid X=x, W=w]$ with the augmented IPW (AIPW) score,
\begin{equation*}
\begin{split}
    \Gamma_i^{\mathrm{aipw}} &= \what{\mu}(X_i, 1) - \what{\mu}(X_i, 0) \\
    & + \frac{W_i - \pi}{\pi(1-\pi)}(Y_i - \what{\mu}(X_i, W_i)).
\end{split}
\end{equation*}

If the data are from an observational study, or are right-censored, then one cannot generally find a score that exactly satisfies \eqref{eq:score-cond}. However, if unconfoundedness, or non-informative censoring (respectively) holds, then the condition can be satisfied approximately, with a score $\what{\Gamma}_i$ where
\begin{equation*}
    \what{\Gamma}_i = \Gamma_i^\ast + e_i
    \label{eq:score-approx-err}
\end{equation*}
with $\Gamma_i^\ast$ satisfying \eqref{eq:score-cond} exactly, and $e_i$ is an approximation error. If $e_i$ goes to zero at an appropriate rate, then replacing $\Gamma_i^\ast$ with $\what{\Gamma}_i$ will not affect the estimated calibration function $\what{\gamma}_{\what{\tau}}(\delta)$ very much (we will revisit this in the following section).

To use $\Gamma_i^{\mathrm{ipw}}$ or $\Gamma_i^{\mathrm{aipw}}$ in~\eqref{eq:score-ipw} in an observational study, we replace the overall probability of getting treated $\pi$ with a function of the baseline covariates, i.e., $\pi(x) = P[W=1|X=x]$. This function is unknown, but can be estimated from the data as $\what{\pi}(x)$, and plugged in. Then, the estimated IPW score is
\begin{equation*}
    \what{\Gamma}_i^{\mathrm{ipw}} = \frac{W_i}{\what{\pi}(X_i)}Y_i - \frac{1 - W_i}{1 - \what{\pi}(X_i)}Y_i,
\end{equation*}
and the AIPW score can be adjusted similarly. $\what{\Gamma}_i^{\mathrm{ipw}}$ is an approximately unbiased estimator for $\tau(X_i)$ if $\what{\pi}(X_i)$ is close to $\pi(X_i)$, and $\what{\Gamma}_i^{\mathrm{aipw}}$ is approximately unbiased for $\tau(X_i)$ if either $\what{\mu}(X_i, W_i)$ is close to $\mu(X_i, W_i)$ or $\what{\pi}(X_i)$ is close to $\pi(X_i)$.

Model misspecification is common in parametric modeling due to the lack of knowledge on underlying relationships among variables, and so the doubly robust property of the AIPW score makes it particularly advantageous in observational data.

To deal with survival data that is subject to right censoring, one may use the generalized AIPW estimator from \citet{RobinsRoZh19} as a score, see \citet[Section 2.4]{YadlowskyFlShBrWa21} for the exact formula.

\subsubsection{Score Approximation Error}
\label{sec:nuisance-param-cond}
In estimating the scores in the previous section, we mentioned that it is important that the implemented score $\what{\Gamma}_i$ is close to a score $\Gamma_i^\ast$ satisfying $\E[\Gamma_i^\ast \mid X_i = x] = \tau(x)$ exactly. Because $\Gamma_i^\ast$ only needs to be conditionally unbiased, it can have a large variance (although we assume for regularity that the variance is bounded). Then, we need to consider the remaining approximation error $e_i \defeq \what{\Gamma}_i - \Gamma_i^\ast$.

For the approximation error to have a negligible effect on our estimation, we need to apply a cross-fitting procedure to estimate the nuisance parameters in an out-of-sample way (described in the following paragraph). With this, it's sufficient for $\sqrt{n} \E[(\Gamma_j^\ast + e_j + \Delta_j) e_i] = o(1)$ and $\E[e_i^2] = o(1)$. Many papers in the statistics and econometrics literature have derived conditions on the estimators of $\what{\mu}(\cdot, \cdot)$ and $\what{\pi}(\cdot)$ in the IPW score $\what{\Gamma}_i^{\mathrm{ipw}}$ and AIPW score $\what{\Gamma}_i^{\mathrm{aipw}}$ that satisify this (or a similar) condition \citep{ChernozhukovChDeDuHaNeRo18,NeweyRo18,AtheyWa21,Kennedy20,YadlowskyFlShBrWa21}, and use them to show that the approximation error is lower order.
These papers, and numerous other papers in this literature, show that the sufficient conditions are often weaker for the AIPW score when the the nuisance parameter estimates $\what{\mu}(\cdot, \cdot)$ and $\what{\pi}(\cdot)$ are estimated and applied to construct $\what{\Gamma}_i^{\mathrm{aipw}}$ with cross-fitting, allowing application in settings where the IPW score or the in-sample AIPW score would be severely biased.

There are two ways to implement cross-fitting to construct $\what{\Gamma}_i^{\mathrm{aipw}}$: In both, we split the data into $J$ evenly sized samples with indices $S_1, \dots, S_J$. Starting with keeping split $j$ as a hold-out set, we train the treatment and outcome models, $\what{\pi}_{j}(\cdot)$ and $\what{\mu}(\cdot, \cdot)$, using the other $J-1$ splits and plug the estimates according to the trained model into the scores $\what{\Gamma}_i^{\mathrm{aipw}}$ for the indices $i \in S_j$. We iterate this procedure $J$ times to obtain scores for all of the observations. In the first approach, a separate estimator $\what{\gamma}$ is estimated in each fold, and then the $J$ estimates are averaged to get the overall estimator. In the second, the cross-fit scores are pooled before estimating $\what{\gamma}$ on the overall data. In the Supplementary Materials, we show theoretical results for the first approach. However, in simulations and applications, we find the second to be more stable, and recommend it for routine use. Cross-fitting is common in the semiparametric statistics literature \cite{ChernozhukovChDeDuHaNeRo18,ZhengLa11}, and is closely related to the well-known cross validation procedure in machine learning.

\subsection{$\ell_2$-ECETH Estimation}
Given the estimate $\what{\gamma}_{\what{\tau}}$ from \eqref{eq:gamma-hat}, there are two natural estimators of $\ell_2$-ECETH $\theta$, the plug-in error
\begin{equation*}
    \what{\theta}_{\mathrm{plug}} = \frac{1}{n}\sum_{i=1}^n (\what{\gamma}_{\what{\tau}}(\Delta_i)-\Delta_i)^2,
\end{equation*}
and the robust (i.e., de-biased) estimator,
\begin{equation}
    \what{\theta}_{\mathrm{robust}}  = \frac{1}{n}\sum_{i=1}^n (\what{\Gamma}_i-\Delta_i)(\what{\gamma}_{\what{\tau}}(\Delta_i)-\Delta_i),
\end{equation}
where in both, $\Delta_i = \what{\tau}(X_i)$.

To understand why we call the second estimator a de-biased estimator, we begin by expanding the statistical bias of each:
\begin{align*}
    \E[\what{\theta}_{\mathrm{plug}} - \theta] &= \E[(\what{\gamma}_{\what{\tau}}(\Delta_i)-\Delta_i)^2- (\gamma_{\what{\tau}}(\Delta_i) - \Delta_i)^2]
    \\
    &= \E[ (\what{\gamma}_{\what{\tau}}(\Delta_i) - \gamma_{\what{\tau}}(\Delta_i))
    \\
    &\quad~\quad~\quad~\quad (\what{\gamma}_{\what{\tau}}(\Delta_i) + \gamma_{\what{\tau}}(\Delta_i) - 2\Delta_i)],
    \\
    \E[\what{\theta}_{\mathrm{robust}} - \theta] &= \underbrace{\E[(\Gamma_i - \gamma_{\what{\tau}}(\Delta_i))\what{\gamma}_{\what{\tau}}(\Delta_i)]}_{I_1}~+
    \\
    &~\quad~ \underbrace{\E[(\what{\gamma}_{\what{\tau}}(\Delta_i) - \gamma_{\what{\tau}}(\Delta_i))(\gamma_{\what{\tau}}(\Delta_i)-\Delta_i)]}_{I_2}.
\end{align*}
Note that if $\Gamma_i$ is independent of $\what{\gamma}_{\what{\tau}}(\cdot)$, then $I_1 = \E[e_i \what{\gamma}_{\what{\tau}}(\Delta_i)] = o(1/\sqrt{n})$ will be negligible under the conditions discussed in Section~\ref{sec:nuisance-param-cond}. Then, comparing $I_2$ to the bias of $\what{\theta}_{\mathrm{plug}}$, we see that
\begin{align*}
    \E[\what{\theta}_{\mathrm{plug}} - \theta] &= I_2 + \E[(\what{\gamma}_{\what{\tau}}(\Delta) - \gamma_{\what{\tau}}(\Delta))(\what{\gamma}_{\what{\tau}}(\Delta) - \Delta)] \\
    &= 2I_2 + \E[ (\what{\gamma}_{\what{\tau}}(\Delta) - \gamma_{\what{\tau}}(\Delta))^2].
\end{align*}
The second term above will always be positive, and in practice, usually dominates the error, being large when the bias or variance of $\what{\gamma}_{\what{\tau}}(\cdot)$ is large; $I_2$ will be large only when $\what{\gamma}_{\what{\tau}}(\cdot)$ is heavily biased.

\subsubsection{Leave-one-out Correction}
Revisiting the term $I_1$ in the bias expansion, recall that this term will be negligible if $\what{\gamma}_{\what{\tau}}(\cdot)$ is independent of $\Gamma_i$, and the errors $e_i \defeq \what{\Gamma}_i - \Gamma_i^\ast$ satisfy the conditions discussed in Section~\ref{sec:nuisance-param-cond}. However, unless we are careful, the estimate $\what{\gamma}_{\what{\tau}}(\delta_i)$ in \eqref{eq:gamma-hat} is computed as the mean score in the $k^{\mathrm{th}}$ bin, in which the average is taken over all the score elements including $\what{\Gamma}_i$. To guarantee the desired independence, we consider a leave-one-out (LOO) estimator as
\begin{equation}
    \what{\gamma}_{\what{\tau}}^{-i}(\delta_i) = \frac{1}{|I_k|-1}\sum_{j \ne i, j \in I_k} \what{\Gamma}_j 
    \label{eq:gamma-hat-LOO}
\end{equation}
With this estimator, we are able to reduce bias in calibration error estimation as compared to applying the naive estimator $\what{\gamma}_{\what{\tau}}(\delta_{i})$ in \eqref{eq:gamma-hat}. 

\subsection{Considerations When Checking Calibration}
A common use case of evaluating the calibration error of a predictive model is to ensure that the calibration error is not too large before deploying the model in a real-world setting, which would apply to CATE models that predict heterogeneous treatment effects, as well. Here, we discuss some of the statistical considerations that come up in such validation. These considerations apply broadly to ML models, beyond calibration of CATE models.

While we are interested in checking that the model is well-calibrated, \citet{YadlowskyBaTi19} point out that one does \emph{not} want to perform a hypothesis test that the model is well-calibrated. Rather, we are interested in showing that it is statistically unlikely that the model has a high calibration error. Therefore, a hypothesis test of $H_0: \theta \ge \epsilon$, for some $\epsilon > 0$, is most appropriate. By choosing $\epsilon$ so that a model with $\ell_2$-ECETH less than $\epsilon$ would be practically useful, rejecting this hypothesis would suggest that there is enough evidence to believe that the model is calibrated enough for practical use.

Under the conditions on $\what{\Gamma}_i$ described in Section~\ref{sec:nuisance-param-cond}, and other appropriate regularity conditions related to certain bounded variance requirements, the estimator $\what{\theta}_{\mathrm{robust}}$ will be asymptotically linear (shown in the Supplementary Materials). This allows the use of the nonparametric bootstrap to perform inference such as hypothesis testing. In particular, we recommend constructing tests for $H_0$ using the bootstrap standard errors to construct an appropriate $t$-test \cite{EfronTi94}.

\section{Simulation Study}\label{sim}
We compare the performance of the plug-in and robust estimators for calibration errors under a variety of scenarios. For each estimator, we further assess the estimation accuracy and efficiency when different scores in Section~\ref{sec:scores} are applied. In addition, we explore to which extent the proposed estimators are capable of handling high-dimensional covariates. Overall, our simulation results suggest using the robust estimator with an AIPW score to evaluate the calibration error in HTE models. This combination resulted in the best performance and was able to perform well even under high-dimensional settings.   

\subsection{Simulation Schema}
We conducted simulations under both randomized controlled trial (RCT) and observational study settings. For a RCT with a continuous outcome $Y$, let $\gamma_{\what{\tau}}(\delta)$ be some known function in \eqref{eq:gamma-fun}, and the CATE prediction $\Delta \sim$ Unif[-1,1]. We generate the counterfactual outcome $Y{(0)} = X_{1} + \epsilon,$ where $X_1 \sim {N(0,1)}$ and $\epsilon \sim {N(0,1)}$. Then, $Y{(1)} = X_{1} + \epsilon+\gamma_{\what{\tau}}(\delta)$. The treatment variable $W\sim{\mathrm{Bin}(0.5)}$, and the actual outcome $Y_i=W_iY_i{(1)}+(1-W_i)Y_{i}{(0)}$. Thus, our simulated data set $O=\{Y_i{(0)}, Y_i{(1)}, W_i, X_{1i},  \Delta_i\}_{i=1}^{n}.$ For the observational setting, we simulate the treatment as $\mathrm{logit}(W) = 0.3X_0$ and the predicted CATE as $\Delta=0.5X_0$, where $X_0 \sim {N(0,1)}$ and it is a confounder.

To evaluate the performance of our proposed metric for a miscalibrated CATE prediction, we consider
\begin{equation}
    \gamma_{\what{\tau}}(\delta) = (1-\alpha)\delta + \alpha\delta^{2},
    \label{eq:gamma-fun}
\end{equation}
where $\alpha \in [0,1]$ is a tuning parameter. When $\alpha=0$, the CATE prediction model is well calibrated; when $\alpha>0$, the CATE prediction $\delta$ overestimates the true CATE. The true calibration error is 
\begin{equation*}
    \theta=\alpha^{2}\int_{a}^{b}\delta^{2}(1-\delta)^{2}f(\delta)d\delta, 
\end{equation*}
where $a$ and $b$ are the bounds of $\delta$ and $f(\delta)$ is the probability density function of $\delta$.

We estimate the calibration function using the IPW and AIPW scores from Section~\ref{sec:scores} with varied number of bins that computed using the nearest integer function of the sample size, i.e., $\mathrm{nint}(20(N/500)^{(2/5)})$. To understand how robust these estimators are to model misspecification in nuisance parameters, we consider scenarios where both treatment and outcome models are correct versus when the treatment model is misspecified. In RCT settings, we did not estimate any nuisance parameters for the IPW estimator as the treatment probability simply equals the proportion of treated subjects. For the AIPW estimator, we estimate outcome means using a regression forest via the \emph{grf} R package. In observational settings, we estimate propensity scores using a logistic model in low-dimensional cases.  

Furthermore, we explore how well these estimators perform under high-dimensional settings. We generate additional $P$ baseline covariates $\mathbf{X}=(X_1, ..., X_P)^T \sim N(0, \mathbb{1})$, where $\mathbb{1}$ is a $P\times{P}$ identity matrix. These variables are independent to treatment and outcome. We choose $P$ to be 1.25\%, 2.5\%, 5\%, and 10\% of the sample size $N$ to find out the possible dimensions of covariates that these estimators can handle. We estimate the propensity scores using a forest model. To capture the most common sizes of real studies, we consider four sample sizes including 500, 1000, 2000, and 4000. To cover varying levels of miscalibration, we consider three $\alpha$ values: 0, 0.15, and 0.3.  We generate 1000 replicates for each simulation scenario. 

To assess the performance of our proposed method under various settings and scenarios, we employ four evaluation metrics: 1) Bias, defined as the mean difference between estimated calibration error and true calibration error, 2) standard error, computed as the standard deviation of the estimated calibration error across 1000 simulation replicates, 3) standardized bias, defined as the ratio of bias to standard error, and 4) mean squared error (MSE), which is the sum of the squared bias and variance. The MSE measure highly depends on sample size as both bias and variance decrease as the sample size gets larger. Although MSE captures both accuracy and variability of an estimator, it does not reflect the relative relationship between them. Standardized bias, on the other hand, measures the number of standard errors above or below the mean bias of zero given a sample size, which is more informative. 

\subsection{Simulation Results}
Figure \ref{fig:ipw_aipw_comp} shows the performance of the plug-in and robust estimators of calibration errors under a miscalibration level of 0.15. The top and bottom panels correspond to the MSE and standardized bias results, respectively. We see the plug-in estimator (solid lines) resulted in higher MSEs and standardized biases than the robust estimators (dotted lines). Even though both the IPW and AIPW scores are unbiased under the RCT scenarios, the AIPW score resulted in much smaller MSEs due to adopting an outcome model that helps to reduce variance. 

\begin{figure}[ht!]
\centerline{\includegraphics[scale=0.14]{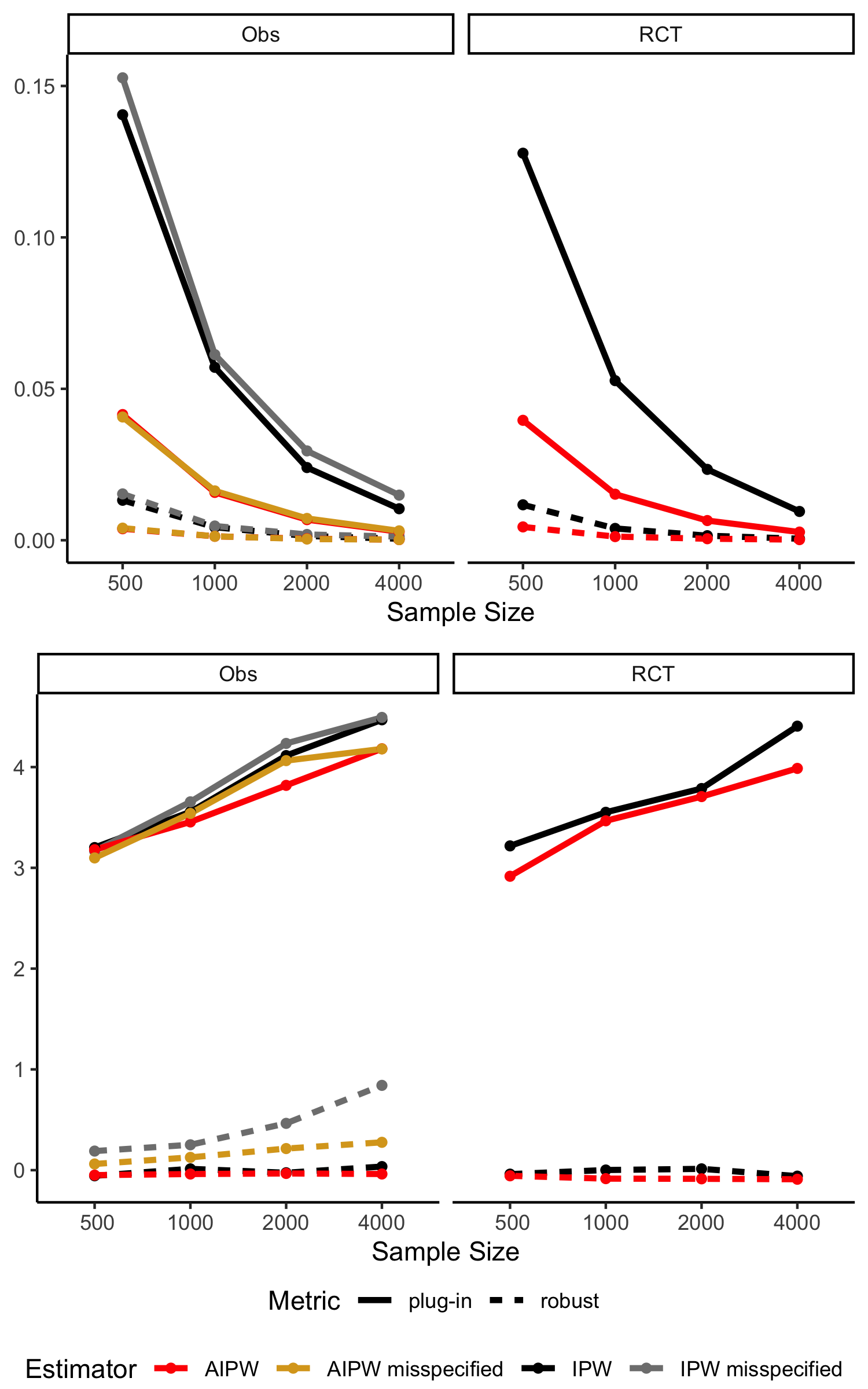}}
\caption{Estimation Performance of Different Calibration Error Estimators. The top and bottom panels show results of mean squared error and standardized bias, respectively. The two sub-panels within each panel show results under observational study and randomized trial settings, respectively. We consider two estimators (plug-in v.s. robust) and four scores with different types of model specification, where \emph{misspecified} indicates the treatment model is misspecified; otherwise, both nuisance models are correctly specified. Overall, using the robust estimator with AIPW scores resulted in the best calibration error estimates.}\vspace{-1em}
\label{fig:ipw_aipw_comp}
\end{figure}
\vspace{.2in}

The plug-in estimator showed an increasing trend between standardized bias and sample size, which may be explained by the relatively large bias shown in Table \ref{t:obs-aipw}. In contrast, the robust estimator presented slightly decreasing trends under the RCT setting and nearly flat trends under the observational study setting. The increasing trend of the robust \emph{IPW misspecified} estimator may be due to using incorrect propensity scores. Under this situation, the estimation errors from the Monte Carlo simulation can be dominating and further induce counterintuitive results. 

The \emph{Obs} panels in Figure \ref{fig:ipw_aipw_comp} also show the impact of model misspecification on estimation performance. We see that the estimators using IPW scores produced a higher MSE and standardized bias when the treatment model is misspecified. In contrast, the incorrect propensity scores only slightly influenced AIPW estimators in terms of minor increase on the standardized bias and almost no change on the MSE.

Table \ref{t:obs-aipw} shows the performance of both estimators under three levels of miscalibration on four evaluation metrics. When both nuisance parameters are correctly estimated in the AIPW scores, the robust estimator yielded smaller biases than the plug-in estimator across all scenarios with comparable stability. Similar comparison results for IPW scores can be found in the Supplemental Materials.

\begin{table}[ht]
\caption{Performance of the Calibration Error Estimators under Observational Study Settings. The calibration function is estimated using an AIPW estimator with correctly modeled nuisance parameters. Under this setup, the robust estimator (bottom half of the table) outperformed the plug-in estimator (top half of the table) across all scenarios.}
\label{t:obs-aipw}
\begin{center}
\begin{tabular}{llcccc}
$\mathbf{\alpha}$ &	\textbf{N} & \textbf{Bias} & \textbf{S.E.} & \textbf{S.bias} & \textbf{MSE} \\ 
\hline \multicolumn{6}{c}{Plug-in estimator}\\
\hline
\multirow{4}{*}{0} & 500 & 0.2234&	0.0644&	3.4708&	0.0541\\
& 1000&0.1329&	0.0358&	3.7113&	0.0189\\
& 2000&0.0852&	0.0189&	4.5133&	0.0076\\
& 4000&0.0531&	0.0114&	4.6709&	0.0029\\\hline
\multirow{4}{*}{0.15} & 500 &0.2201&	0.0655&	3.3582&	0.0527\\
& 1000&0.1313&	0.0388&	3.3831&	0.0188\\
& 2000&0.0839&	0.0207&	4.0458&	0.0075\\
& 4000&0.0521&	0.0130&	3.9954&	0.0029\\\hline
\multirow{4}{*}{0.3} & 500 &0.2171&	0.0733&	2.9606&	0.0525\\
& 1000&0.1300&	0.0463&	2.8068&	0.0191\\
& 2000&0.0825&	0.0266&	3.1023&	0.0075\\
& 4000&0.0511&	0.0173&	2.9518&	0.0029\\
\hline \hline
\multicolumn{6}{c}{Robust estimator}
\\
\hline
\multirow{4}{*}{0} & 500 & -0.0094&	0.0658&	-0.1436&	0.0044\\
& 1000& -0.0065&	0.0359&	-0.1809&	0.0013\\
& 2000& -0.0020&	0.0193&	-0.1062&	0.0004\\
& 4000& -0.0014&	0.0116&	-0.1230&	0.0001\\\hline
\multirow{4}{*}{0.15} & 500 &-0.0103&	0.0675&	-0.1522&	0.0047\\
& 1000& -0.0066&	0.0391&	-0.1681&	0.0016\\
& 2000& -0.0025&	0.0213&	-0.1159&	0.0005\\
& 4000& -0.0019&	0.0132&	-0.1430&	0.0002\\\hline
\multirow{4}{*}{0.3} & 500 &-0.0113&	0.0752&	-0.1507&	0.0058\\
& 1000& -0.0067&	0.0466&	-0.1446&	0.0022\\
& 2000& -0.0032&	0.0271&	-0.1188&	0.0007\\
& 4000& -0.0024&	0.0174&	-0.1375&	0.0003\\
\end{tabular}
\end{center}
\end{table}

Table \ref{t:high-dim} shows the estimator performance under observational studies with high-dimensional covariates, in which the dimension of additional predictors (besides $X_0$ and $X_1$) varies from 50 to 400, and the miscalibration level is 0.15. We see both accuracy and efficiency dropped as the number of covariates $P$ goes up, but, overall, the robust estimator performed not much worse than in low dimension situations (Table \ref{t:obs-aipw}). High-dimensional results for other estimators, study settings, and simulation scenarios are included in the  Supplemental Materials.

To summarize, the AIPW-based robust ECETH estimator should be used in general for evaluating the calibration error in CATE estimates regardless of the study type and the dimension of covariates. 

\begin{table}[t]
\caption{Performance of the robust ECETH estimator under the high-dimensional settings. Assuming an observational study setting with a miscalibration level $\alpha$ of 0.15. An robust estimator with AIPW scores is used where both nuisance parameters are correctly modeled. The robust estimator performed roughly as well as in low dimension situations ($P=2$).} 
\label{t:high-dim}
\begin{center}
\begin{tabular}{llcccc}
\textbf{N}&	\textbf{P} & \textbf{Bias} & \textbf{S.E.} & \textbf{S.bias} & \textbf{MSE} \\ 
\hline\\
\multirow{1}{*}{500} & 50  & 0.0053&	0.0702&	0.0759&	0.0050\\\hline
\multirow{2}{*}{1000} & 50 &0.0001&	0.0366&	0.0037&	0.0013\\
& 100&0.0060&	0.0412&	0.1454&	0.0017\\\hline
\multirow{3}{*}{2000} & 50 &0.0017&	0.0228&	0.0738&	0.0005\\
& 100&0.0044&	0.0232&	0.1919&	0.0006\\
& 200&0.0073&	0.0257&	0.2854&	0.0007\\\hline
\multirow{5}{*}{4000} & 50 &0.0008&	0.0134&	0.0616&	0.0002\\
& 100&0.0035&	0.0136&	0.2569&	0.0002\\
& 200&0.0070&	0.0150&	0.4639&	0.0003\\
& 400&0.0102&	0.0166&	0.6134&	0.0004\\
\end{tabular}
\end{center}
\end{table}

\section{Data Application}
In this section, we use the proposed HTE calibration metric to evaluate two CATE models in an applied setting. The CRITEO-UPLIFT1 is a large-scale trial where a portion of users are randomly prevented from being targeted by advertising. \citet{diemert:hal-02515860} compared the performance of two uplift prediction models on this data set and showed that there exists treatment effect heterogeneity among users. CRITEO-UPLIFT1 data set contains the information of over 25M users on treatment, visit and conversion labels, and 12 features. The feature names are masked due to privacy concerns while preserving their ability of prediction. The visit and conversion labels are binary, and positive labels indicate that the user visited/converted on the advertiser website during the test period.

\subsection{Analysis Procedures}
We randomly sampled 640,000 users from the full CRITEO-UPLIFT1 data set and randomly selected half of the samples as the training set and the other half as the testing set. We derived two HTE models to estimate the user-level treatment effects, which are defined as the risk difference in conversion rate between users who are prevented and not prevented from being advertised. First, we apply the causal forest method in which tree splits are selected to maximize the heterogeneity in treatment effects between two daughter nodes \citep{Athey2019}. We fit a causal forest model with 500 trees, allowing all 12 variables to be randomly selected and tried at each split, and requiring 5000 minimal number of observations in each tree leaf. The second approach we applied is an S-learner of random forest. Specifically, we used all 12 features and the treatment variable to train a single model. Then, we made predictions on the ``counterfactual'' testing sets where the treatment value is 1 and 0 for all subjects separately. The HTEs are then computed as the difference between the estimated counterfactual outcomes. Both implementations are carried out using the R package \textit{grf}. Causal forest directly estimates treatment effect heterogeneity and has been shown outperforming other conditional mean modeling approaches such as S-learner. Thus, we expect the causal forest to have a smaller calibration error than the S-learner model. 

To evaluate the $\ell_2$-ECETH, we used the top-performing robust estimator $\what{\theta}_{\mathrm{robust}}$, with the calibration function estimated using the AIPW score in 10 bins. We computed the 95\% confidence interval (CI) via bootstrapping (1000 bootstrap resamples), in which the 2.5\% and 97.5\% percentiles are the lower and upper bounds, respectively. Because the ECETH must be non-negative, but the unbiased estimates and confidence intervals need not be, we truncate any negative values at $0$.

\subsection{Analysis Results}
The average treatment effect (ATE) in our study sample is $1.2 \times {10^{-3}}$ (95\% CI: $9 \times {10^{-4}}$, $1.4 \times {10^{-3}}$), which is estimated using the same AIPW score that applied in the ECETH estimator. The ATE estimated by causal forest and S-learner over the test data are $1.1 \times {10^{-3}}$ and $1.5 \times {10^{-5}}$, respectively. Table \ref{t:criteo-calib} shows that the causal forest and the S-learner random forest resulted in at most root-$\ell_2$-ECETH of $\sqrt{8.9\times{10^{-7}}}=9\times{10^{-4}}$ and $\sqrt{6.8\times{10^{-6}}}=2.6 \times {10^{-3}}$, respectively. In addition, Figure \ref{fig:criteo_calib_plot} shows that causal forest estimated CATE with large variability, but S-learner did not show much ability of estimating the heterogeneity in treatment effects. This is consistent with the result that causal forest model yielded a significantly smaller calibration error than the S-learner random forest model (Table \ref{t:criteo-calib}).

\begin{figure}[ht!]
\centerline{\includegraphics[scale=0.17]{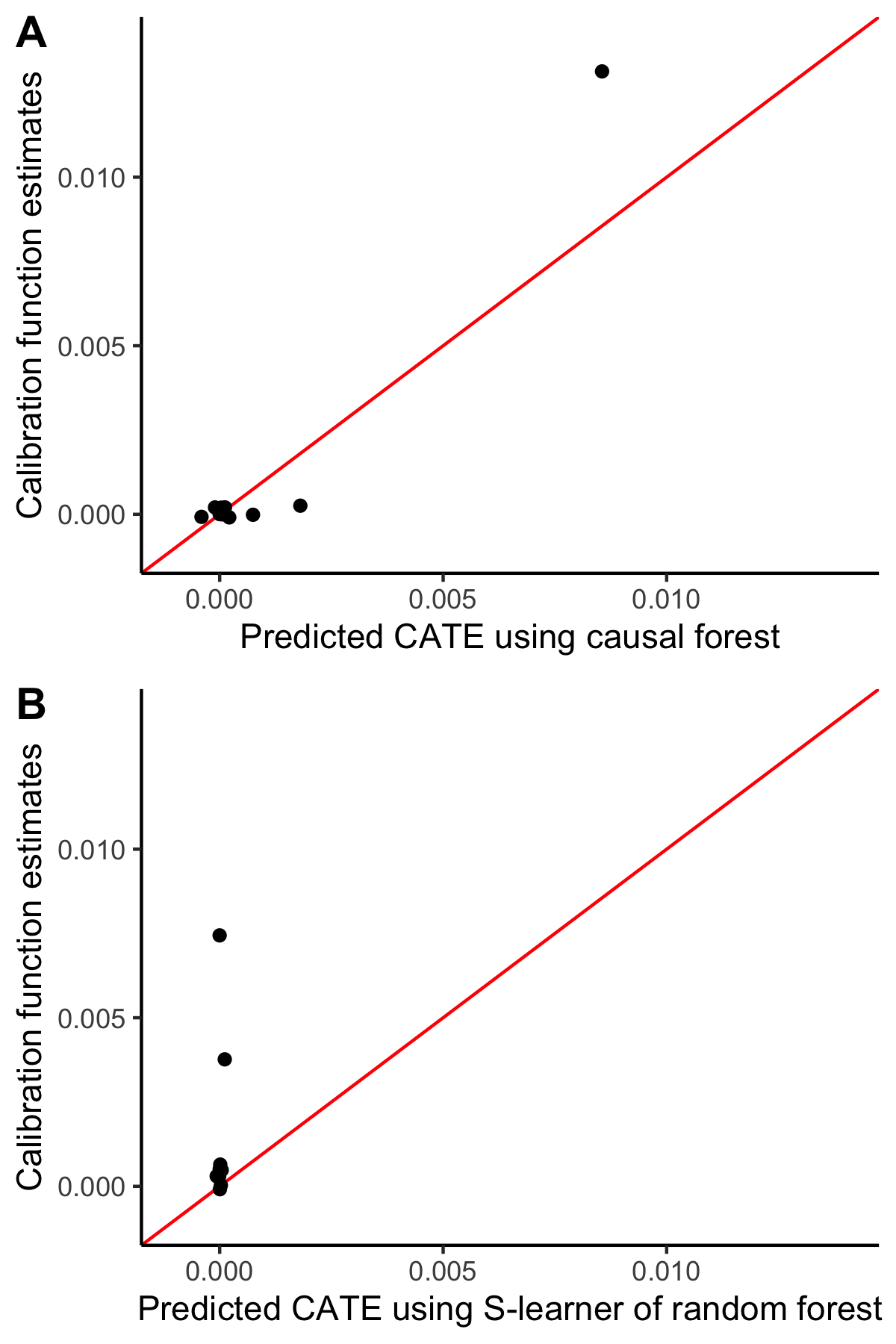}}
\caption{Calibration Plot between Predicted and Observed CATEs in CRITEO-UPLIFT1. Ten bins are used for computing the calibration function. The red line at 45 degrees indicates perfect calibration. The CATE estimates from causal forest are better calibrated than those from the S-learner of random forest.}\vspace{-1em}
\label{fig:criteo_calib_plot}
\end{figure}

\begin{table}[ht]
\caption{Calibration Errors of the CATE Models Derived Using CRITEO-UPLIFT Data.} 
\label{t:criteo-calib}
\begin{center}
\begin{tabular}{lcc}
\textbf{HTE Model} &\textbf{Estimate} & \textbf{95\% CI} \\
\hline\\
Causal forest & 0  & (0, $8.9\times{10^{-7}}$) \\
Random forest & $4.2\times{10^{-6}}$  & ($2.1\times{10^{-6}}$, $6.8\times{10^{-6}}$) 
\end{tabular}
\end{center}
\end{table}

\vspace{-1em}
\section{Discussion}\label{dis}
We propose a general calibration metric for evaluating heterogeneous treatment effects on continuous, binary, or survival outcomes. Our metric can be applied to both randomized trials or observational studies with high-dimensional covariates. Given a large amount of interest in HTE estimation and many proposals of novel statistical methods, it is crucial to compare available approaches and choose the top-performing one for deployment at clinical sites. Our metric serves exactly this purpose by providing a formal way to estimate the calibration error in HTE estimates. The correct identification of an accurate and efficient HTE model can support better treatment decision making, further achieving ultimate population health outcomes. 

There are two potential limitations to our proposed method. In observational studies, the unbiased estimation of the calibration function can only be achieved when there are no unmeasured confounders. Otherwise, the estimated calibration error may be inaccurate. Also, the score are constructed using IPW based estimators that may be unstable with extreme weights, e.g., when the group sizes under treatment arms are very unbalanced or most of the subjects are censored before the time of interest.

As the key to calibration error estimation is to construct unbiased scores for CATE, one may also consider using innovative machine learning methods. For example, with a time-to-event outcome, one could apply the causal survival forest \cite{cui2021} to provide a nonparametric estimation of the calibration functions in terms of the difference in (restricted) mean survival times. Moreover, our metric focuses on evaluating HTE estimates on the absolute risk difference scale, so a natural extension is to define the calibration function on the relative scale, which enables the assessment of HTE estimates such as relative risks or hazard ratios. 

\section{Acknowledgement}
This work was partially supported by R01 HL144555 from the National Heart, Lung, and Blood Institute (NHLBI).

\bibliography{bib}

\begin{thebibliography}{39}
\providecommand{\natexlab}[1]{#1}
\providecommand{\url}[1]{\texttt{#1}}
\expandafter\ifx\csname urlstyle\endcsname\relax
  \providecommand{\doi}[1]{doi: #1}\else
  \providecommand{\doi}{doi: \begingroup \urlstyle{rm}\Url}\fi

\bibitem[Ascarza(2018)]{Ascarza18}
Eva Ascarza.
\newblock Retention futility: Targeting high-risk customers might be
  ineffective.
\newblock \emph{Journal of Marketing Research}, 55\penalty0 (1):\penalty0
  80--98, 2018.

\bibitem[Athey et~al.(2019)Athey, Tibshirani, and Wager]{Athey2019}
S~Athey, J~Tibshirani, and S~Wager.
\newblock Generalized random forests.
\newblock \emph{Ann. Statist.}, pages 47(2): 1148--1178, 2019.

\bibitem[Athey and Imbens(2016)]{AtheyIm16}
Susan Athey and Guido Imbens.
\newblock Recursive partitioning for heterogeneous causal effects.
\newblock \emph{Proceedings of the National Academy of Sciences}, 113\penalty0
  (27):\penalty0 7353--7360, 2016.

\bibitem[Athey and Wager(2021)]{AtheyWa21}
Susan Athey and Stefan Wager.
\newblock Policy learning with observational data.
\newblock \emph{Econometrica}, 89\penalty0 (1):\penalty0 133--161, 2021.

\bibitem[Basu et~al.(2017)Basu, Sussman, Rigdon, and et~al.]{Basu2017}
S~Basu, JB~Sussman, J~Rigdon, and et~al.
\newblock Benefit and harm of intensive blood pressure treatment: Derivation
  and validation of risk models using data from the sprint and accord trials.
\newblock \emph{PLoS Med}, page 14(10): e1002410, 2017.

\bibitem[Bress et~al.(2021)Bress, Greene, Derington, and et~al.]{Bress2021}
AP~Bress, T~Greene, CG~Derington, and et~al.
\newblock Patient selection for intensive blood pressure management based on
  benefit and adverse events.
\newblock \emph{J Am Coll Cardiol.}, pages 77(16):1977--1990, 2021.

\bibitem[Chen(2019)]{Chen19}
Yen-Chi Chen.
\newblock Lecture notes (stat 535): Lecture 3: Regression: Nonparametric
  approaches, Autumn 2019.
\newblock URL
  \url{http://faculty.washington.edu/yenchic/19A_stat535/Lec3_regression.pdf}.

\bibitem[Chernozhukov et~al.(2018{\natexlab{a}})Chernozhukov, Chetverikov,
  Demirer, Duflo, Hansen, Newey, and Robins]{ChernozhukovChDeDuHaNeRo18}
Victor Chernozhukov, Denis Chetverikov, Mert Demirer, Esther Duflo, Christian
  Hansen, Whitney Newey, and James Robins.
\newblock Double/debiased machine learning for treatment and structural
  parameters.
\newblock \emph{The Econometrics Journal}, 21\penalty0 (1):\penalty0 C1--C68,
  2018{\natexlab{a}}.

\bibitem[Chernozhukov et~al.(2018{\natexlab{b}})Chernozhukov, Demirer, Duflo,
  and Fernandez-Val]{ChernozhukovDeDuFe18}
Victor Chernozhukov, Mert Demirer, Esther Duflo, and Ivan Fernandez-Val.
\newblock Generic machine learning inference on heterogeneous treatment effects
  in randomized experiments, with an application to immunization in india.
\newblock Technical report, National Bureau of Economic Research,
  2018{\natexlab{b}}.

\bibitem[Chipman et~al.(2010)Chipman, George, and McCulloch]{Chipman2010}
HA~Chipman, EI~George, and RE~McCulloch.
\newblock Bart: Bayesian additive regression trees.
\newblock \emph{Ann Appl Stat}, pages 4(1):266--298, 2010.

\bibitem[Cox(1958)]{Cox58}
David~R Cox.
\newblock Two further applications of a model for binary regression.
\newblock \emph{Biometrika}, 45\penalty0 (3/4):\penalty0 562--565, 1958.

\bibitem[Cui et~al.(2021)Cui, Kosorok, Sverdrup, Wager, and et~al.]{cui2021}
Yifan Cui, Michael~R. Kosorok, Erik Sverdrup, Stefan Wager, and et~al.
\newblock Estimating heterogeneous treatment effects with right-censored data
  via causal survival forests, 2021.

\bibitem[Dagogo-Jack~I(2018)]{Dagogo-Jack18}
Shaw~AT Dagogo-Jack~I.
\newblock Tumour heterogeneity and resistance to cancer therapies.
\newblock \emph{Nat Rev Clin Oncol}, 15\penalty0 (2):\penalty0 81--94, 2018.

\bibitem[Diemert et~al.(2018)Diemert, Betlei, Renaudin, and
  Amini]{diemert:hal-02515860}
Eustache Diemert, Artem Betlei, Christophe Renaudin, and Massih-Reza Amini.
\newblock {A Large Scale Benchmark for Uplift Modeling}.
\newblock In \emph{{KDD}}, London, United Kingdom, 2018.
\newblock \doi{10.1145/nnnnnnn.nnnnnnn}.
\newblock URL \url{https://hal.archives-ouvertes.fr/hal-02515860}.

\bibitem[Doksum(2008)]{Doksum08}
Kjell Doksum.
\newblock Lecture notes (stat 709): Lecture 33: U-statistics and their
  variances, 2008.
\newblock URL \url{http://pages.stat.wisc.edu/~doksum/STAT709/n709-33.pdf}.

\bibitem[Dusenberry et~al.(2020)Dusenberry, Tran, Choi, Kemp, Nixon, Jerfel,
  Heller, and Dai]{DusenberryTrChKeNiJeHeDa20}
Michael~W Dusenberry, Dustin Tran, Edward Choi, Jonas Kemp, Jeremy Nixon,
  Ghassen Jerfel, Katherine Heller, and Andrew~M Dai.
\newblock Analyzing the role of model uncertainty for electronic health
  records.
\newblock In \emph{Proceedings of the ACM Conference on Health, Inference, and
  Learning}, pages 204--213, 2020.

\bibitem[Efron and Tibshirani(1994)]{EfronTi94}
Bradley Efron and Robert~J Tibshirani.
\newblock \emph{An introduction to the bootstrap}.
\newblock CRC press, 1994.

\bibitem[Friedman(2001)]{Friedman2001}
JH~Friedman.
\newblock Greedy function approximation: A gradient boosting machine.
\newblock \emph{Ann. Statist.}, pages 29(5):1189--1232, 2001.

\bibitem[Guo et~al.(2017)Guo, Pleiss, Sun, and Weinberger]{GuoPlSuWe17}
Chuan Guo, Geoff Pleiss, Yu~Sun, and Kilian~Q Weinberger.
\newblock On calibration of modern neural networks.
\newblock In \emph{International Conference on Machine Learning}, pages
  1321--1330. PMLR, 2017.

\bibitem[Hern{\'a}n and Robins(2020)]{HernanRo20}
Miguel~A Hern{\'a}n and James~M Robins.
\newblock Causal inference: What if, 2020.

\bibitem[Kennedy(2020)]{Kennedy20}
Edward~H Kennedy.
\newblock Optimal doubly robust estimation of heterogeneous causal effects.
\newblock \emph{arXiv preprint arXiv:2004.14497}, 2020.

\bibitem[Kuleshov et~al.(2018)Kuleshov, Fenner, and Ermon]{KuleshovFeEr18}
Volodymyr Kuleshov, Nathan Fenner, and Stefano Ermon.
\newblock Accurate uncertainties for deep learning using calibrated regression.
\newblock In \emph{International Conference on Machine Learning}, pages
  2796--2804. PMLR, 2018.

\bibitem[Naeini et~al.(2015)Naeini, Cooper, and Hauskrecht]{NaeiniCoHa15}
Mahdi~Pakdaman Naeini, Gregory Cooper, and Milos Hauskrecht.
\newblock Obtaining well calibrated probabilities using bayesian binning.
\newblock In \emph{Twenty-Ninth AAAI Conference on Artificial Intelligence},
  2015.

\bibitem[Newey and Robins(2018)]{NeweyRo18}
Whitney~K Newey and James~R Robins.
\newblock Cross-fitting and fast remainder rates for semiparametric estimation.
\newblock \emph{arXiv preprint arXiv:1801.09138}, 2018.

\bibitem[Nie and Wager(2020)]{nie2020quasioracle}
Xinkun Nie and Stefan Wager.
\newblock Quasi-oracle estimation of heterogeneous treatment effects, 2020.

\bibitem[Nixon et~al.(2019)Nixon, Dusenberry, Zhang, Jerfel, and
  Tran]{NixonDuZhJeTr19}
Jeremy Nixon, Michael~W Dusenberry, Linchuan Zhang, Ghassen Jerfel, and Dustin
  Tran.
\newblock Measuring calibration in deep learning.
\newblock \emph{CVPR Workshops}, 2\penalty0 (7), 2019.

\bibitem[Radcliffe and Surry(1999)]{RadcliffeSu99}
Nicholas Radcliffe and Patrick Surry.
\newblock Differential response analysis: Modeling true responses by isolating
  the effect of a single action.
\newblock \emph{Credit Scoring and Credit Control IV}, 1999.

\bibitem[Radcliffe(2007)]{Radcliffe07}
Nicholas~J Radcliffe.
\newblock Using control groups to target on predicted lift: Building and
  assessing uplift models.
\newblock \emph{Direct Marketing Analytics Journal}, 1\penalty0 (3):\penalty0
  14--21, 2007.

\bibitem[Robins et~al.(1994)Robins, Rotnitzky, and Zhao]{RobinsRoZh19}
James~M Robins, Andrea Rotnitzky, and Lue~Ping Zhao.
\newblock Estimation of regression coefficients when some regressors are not
  always observed.
\newblock \emph{Journal of the American statistical Association}, 89\penalty0
  (427):\penalty0 846--866, 1994.

\bibitem[Roelofs et~al.(2020)Roelofs, Cain, Shlens, and Mozer]{RoelofsCaShMo20}
Rebecca Roelofs, Nicholas Cain, Jonathon Shlens, and Michael~C Mozer.
\newblock Mitigating bias in calibration error estimation.
\newblock \emph{arXiv preprint arXiv:2012.08668}, 2020.

\bibitem[Shalit et~al.(2017)Shalit, Johansson, and Sontag]{ShalitJoSo17}
Uri Shalit, Fredrik~D Johansson, and David Sontag.
\newblock Estimating individual treatment effect: generalization bounds and
  algorithms.
\newblock In \emph{International Conference on Machine Learning}, pages
  3076--3085. PMLR, 2017.

\bibitem[Shi et~al.(2019)Shi, Blei, and Veitch]{ShiBlVe19}
Claudia Shi, David~M Blei, and Victor Veitch.
\newblock Adapting neural networks for the estimation of treatment effects.
\newblock \emph{arXiv preprint arXiv:1906.02120}, 2019.

\bibitem[{SPRINT Research Group}(2015)]{SPRINT2015}
{SPRINT Research Group}.
\newblock A randomized trial of intensive versus standard blood-pressure
  control.
\newblock \emph{N Engl J Med}, pages 2103--2106, 2015.

\bibitem[Stevens and Poppe(2020)]{StevensPo20}
Richard~J Stevens and Katrina~K Poppe.
\newblock Validation of clinical prediction models: what does the
  “calibration slope” really measure?
\newblock \emph{Journal of clinical epidemiology}, 118:\penalty0 93--99, 2020.

\bibitem[Steyerberg et~al.(2004)Steyerberg, Borsboom, van Houwelingen,
  Eijkemans, and Habbema]{SteyerbergBoHoEiHa04}
Ewout~W Steyerberg, Gerard~JJM Borsboom, Hans~C van Houwelingen, Marinus~JC
  Eijkemans, and J~Dik~F Habbema.
\newblock Validation and updating of predictive logistic regression models: a
  study on sample size and shrinkage.
\newblock \emph{Statistics in medicine}, 23\penalty0 (16):\penalty0 2567--2586,
  2004.

\bibitem[Wager and Athey(2018)]{WagerAt18}
Stefan Wager and Susan Athey.
\newblock Estimation and inference of heterogeneous treatment effects using
  random forests.
\newblock \emph{Journal of the American Statistical Association}, 113\penalty0
  (523):\penalty0 1228--1242, 2018.

\bibitem[Yadlowsky et~al.(2019)Yadlowsky, Basu, and Tian]{YadlowskyBaTi19}
Steve Yadlowsky, Sanjay Basu, and Lu~Tian.
\newblock A calibration metric for risk scores with survival data.
\newblock In \emph{Machine Learning for Healthcare Conference}, pages 424--450.
  PMLR, 2019.

\bibitem[Yadlowsky et~al.(2021)Yadlowsky, Fleming, Shah, Brunskill, and
  Wager]{YadlowskyFlShBrWa21}
Steve Yadlowsky, Scott Fleming, Nigam Shah, Emma Brunskill, and Stefan Wager.
\newblock Evaluating treatment prioritization rules via rank-weighted average
  treatment effects.
\newblock \emph{arXiv preprint arXiv:2111.07966}, 2021.

\bibitem[Zheng and van~der Laan(2011)]{ZhengLa11}
Wenjing Zheng and Mark~J van~der Laan.
\newblock Cross-validated targeted minimum-loss-based estimation.
\newblock In \emph{Targeted Learning}, pages 459--474. Springer, 2011.

\end{thebibliography}


\clearpage
\appendix

\thispagestyle{empty}

\onecolumn \makesupplementtitle

\section{PROOFS}

\subsection{Asymptotic Linearity}
As mentioned in Section~\ref{sec:scores}, the cross-fitting estimator is the sum of $J$ separate estimators constructed using a sample splitting procedure. To show that this is asymptotically linear, we show that each of the $J$ estimators is asymptotically linear, which implies the result. With an abuse of notation, we will assume that there are $n$ observations in each partition, the first, containing observations $i=1,\dots,n$ denoted by $D_1$, and the rest are from a second partition $D_2$.
\begin{theorem}
Let $\gamma_{\what{\tau}}(\delta)$ be $L$-Lipschitz, and consider the robust estimator
\begin{equation*}
    \what{\theta}_{\mathrm{robust}} = \frac{1}{n}\sum_{i=1}^n (\what{\Gamma}_i - \Delta_i)(\what{\gamma}_{-i}(\Delta_i) - \Delta_i),
\end{equation*}
where $\what{\Gamma}_i$ satisfies $\what{\Gamma}_i = \Gamma_i^\ast + e_i$, with $\{\Gamma_i^\ast\}_{i=1}^n$ i.i.d. and $\var(\Gamma_i^\ast \mid \Delta_i) < C$ almost everywhere, and $e_i$ satisfying the conditions given in Section~2.1.2: (a) $(X_i, \Delta_i, Y_i, W_i, e_i)_{i=1}^n$ are independent given $D_2$, and (b) $\sqrt{n} \E[(\Gamma_j^\ast + e_j + \Delta_j) e_i | D_2] = o_P(1)$ and $\E[e_i^2 | D_2] = o_P(1)$. If $\Delta_i$ is bounded on $[-B, B]$ and has a density $0 < 1/C \le p_\Delta(\delta) \le C < \infty$, and the number of bins $K$ for fitting $\what{\gamma}(\cdot)$ is chosen so that $\sqrt{n}/K \to 0$ and $K / n \to 0$, then
\begin{equation*}
    \sqrt{n}(\what{\theta}_{\mathrm{robust}} - \theta) = \frac{1}{\sqrt{n}}\sum_{i=1}^n \psi(X_i, \Delta_i, Y_i, W_i) + o_P(1),
\end{equation*}
for some fixed mean-zero influence function $\psi(\cdot)$, $\E[\psi(X_i, \Delta_i, Y_i, W_i)] = 0$ with bounded variance, $\E[\psi^2(X_i, \Delta_i, Y_i, W_i)] < \infty$.
\end{theorem}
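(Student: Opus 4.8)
The plan is to pass to an ``oracle'' estimator built from the exact scores $\Gamma_i^\ast$, show it is asymptotically linear by splitting it into an i.i.d.\ average plus remainder terms that are $o_P(n^{-1/2})$, and then verify that replacing $\Gamma_i^\ast$ by $\what\Gamma_i = \Gamma_i^\ast + e_i$ only perturbs the estimator by $o_P(n^{-1/2})$.

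\emph{Step 1 (reduce to the oracle).} Writing $\what\Gamma_i = \Gamma_i^\ast + e_i$, the leave-one-out histogram splits as $\what\gamma_{-i}(\Delta_i) = \gamma^\ast_{-i}(\Delta_i) + \bar e_{-i}$, where $\gamma^\ast_{-i}$ uses $\{\Gamma_j^\ast\}$ and $\bar e_{-i}$ is the leave-one-out bin average of $\{e_j\}$. Expanding the product gives $\what\theta_{\mathrm{robust}} = \what\theta^\ast + T_1 + T_2 + T_3$ with $\what\theta^\ast = \frac1n\sum_i(\Gamma_i^\ast - \Delta_i)(\gamma^\ast_{-i}(\Delta_i) - \Delta_i)$, $T_1 = \frac1n\sum_i e_i(\gamma^\ast_{-i}(\Delta_i) - \Delta_i)$, $T_2 = \frac1n\sum_i(\Gamma_i^\ast - \Delta_i)\bar e_{-i}$, $T_3 = \frac1n\sum_i e_i\bar e_{-i}$; I return to $T_1,T_2,T_3$ in Step 3.

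\emph{Step 2 (the oracle).} Put $g(\delta) = \gamma_{\what\tau}(\delta) - \delta$ and $Z_i = \Gamma_i^\ast - \gamma_{\what\tau}(\Delta_i)$. Since $\E[\Gamma_i^\ast\mid\what\tau(X_i)=\delta] = \E[\tau(X_i)\mid\what\tau(X_i)=\delta] = \gamma_{\what\tau}(\delta)$, we have $\E[Z_i\mid\Delta_i]=0$ and $\var(Z_i\mid\Delta_i)<C$. On any bin (width at most $2B/K$) the Lipschitz hypothesis yields
\[
\gamma^\ast_{-i}(\Delta_i) - \gamma_{\what\tau}(\Delta_i) = \frac{1}{m_i-1}\sum_{j\ne i,\,j\in I_{k(i)}}Z_j + b_i, \qquad |b_i| \le \frac{2LB}{K},
\]
and because the density of $\Delta$ is bounded above and below, every bin has of order $n/K$ points, which diverges, with probability $1-o(1)$ (absorbing the event that some bin is empty or a singleton into a $o_P(1)$ term). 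Now decompose $\what\theta^\ast - \theta = A_n + R_{n,1} + R_{n,2}$ with $A_n = \frac1n\sum_i\big\{(\Gamma_i^\ast - \Delta_i)g(\Delta_i) - \theta\big\}$, $R_{n,1} = \frac1n\sum_i Z_i\big(\gamma^\ast_{-i}(\Delta_i) - \gamma_{\what\tau}(\Delta_i)\big)$ and $R_{n,2} = \frac1n\sum_i g(\Delta_i)\big(\gamma^\ast_{-i}(\Delta_i) - \gamma_{\what\tau}(\Delta_i)\big)$. The term $A_n$ is an i.i.d.\ mean-zero average, since $\E[(\Gamma_i^\ast - \Delta_i)g(\Delta_i)] = \E[g(\Delta_i)^2] = \theta$; it contributes $(\Gamma_i^\ast - \Delta_i)g(\Delta_i) - \theta$ to the influence function. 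Conditioning on $(\Delta_j)_j$ and $D_2$ makes the $Z_i$ mean-zero and independent, so $\E[R_{n,1}]=0$; substituting the displayed decomposition of $\gamma^\ast_{-i}-\gamma_{\what\tau}$ and counting diagonal and same-bin index pairs gives $n\,\E[R_{n,1}^2] = O(K/n + 1/K^2)\to 0$, hence $\sqrt n R_{n,1} = o_P(1)$. For $R_{n,2}$, interchanging the summation order writes it as $\frac1n\sum_j Z_j\,\bar g_{-j}$ plus the binning-bias term $\frac1n\sum_i g(\Delta_i)b_i = O_P(1/K)$, where $\bar g_{-j}$ is the leave-one-out bin average of $g$; decomposing $\bar g_{-j} = g(\Delta_j) + O(1/K) + \xi_j$ with $\xi_j$ a within-bin fluctuation peels off the i.i.d.\ mean-zero average $\frac1n\sum_j Z_j g(\Delta_j)$ (contributing $Z_i g(\Delta_i)$ to the influence function), while the $O(1/K)$ part is $O_P(1/(K\sqrt n))$ and the $\xi_j$ part has $n$ times its variance of order $K/n\to 0$. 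Collecting the surviving i.i.d.\ pieces,
\[
\sqrt n(\what\theta^\ast - \theta) = \frac{1}{\sqrt n}\sum_i\psi_i + o_P(1), \qquad \psi_i = \big(2\Gamma_i^\ast - \Delta_i - \gamma_{\what\tau}(\Delta_i)\big)\,g(\Delta_i) - \theta,
\]
and $\psi_i$ has mean zero (its expectation is $\E[g(\Delta_i)^2] - \theta = 0$) and finite variance, because $\E[(\Gamma_i^\ast)^2] = \E[\var(\Gamma_i^\ast\mid\Delta_i)] + \E[\gamma_{\what\tau}(\Delta_i)^2] < \infty$ and $\Delta_i,\gamma_{\what\tau}(\Delta_i)$ are bounded.

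\emph{Step 3 (approximation error and the obstacle).} Each $T_\ell$ is an average of products of $e_i$ with a bin average of $\Gamma_j^\ast$, $\Delta_j$, or $e_j$; conditioning on $D_2$, the mean of $\sqrt n\,T_\ell$ is a weighted combination of quantities controlled by condition (b), $\sqrt n\,\E[(\Gamma_j^\ast + e_j + \Delta_j)e_i\mid D_2] = o_P(1)$ (using also that $\Delta_i$ and $\Delta_j$ differ by $O(1/K)$ within a bin), and the conditional variance of $\sqrt n\,T_\ell$ is of order $\frac1n\sum_i\E[e_i^2\mid D_2] = o_P(1)$ by the conditional independence in (a), $\E[e_i^2\mid D_2]=o_P(1)$, and the uniform $L^2$-boundedness of $\gamma^\ast_{-i}(\Delta_i)-\Delta_i$ (an average of bounded-variance scores). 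Hence $\sqrt n(T_1+T_2+T_3)=o_P(1)$ and, combined with Step 2, the theorem follows with influence function $\psi$. The main obstacle is the variance control of $R_{n,1}$ and of the fluctuation $\xi_j$ in $R_{n,2}$: these are the genuinely non-i.i.d.\ pieces, and bounding their second moments requires careful bin-level accounting --- per-bin variance of order $K/n$, per-bin binning bias of order $1/K$ --- together with conditioning arguments that use the leave-one-out construction to zero out the conditional mean of $Z_i$. This is precisely where the two rate conditions enter: $\sqrt n/K\to0$ kills the binning bias after $\sqrt n$-scaling and $K/n\to0$ kills the binning variance; the density lower bound is what guarantees no bin is empty. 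The rest is routine moment arithmetic.
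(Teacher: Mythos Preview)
Your proposal is correct and follows essentially the same route as the paper: your decomposition $A_n + R_{n,1} + R_{n,2}$ is exactly the paper's $(\star\star) + (4\star) + (3\star)$, your extraction of the $\frac{1}{n}\sum_j Z_j g(\Delta_j)$ piece from $R_{n,2}$ mirrors the paper's $(5\star)+(6\star)$ split (obtained there by the same index-swap), and your Step~3 matches the paper's Step~2 on score approximation error. You also arrive at the same explicit influence function $\psi_i = (2\Gamma_i^\ast - \Delta_i - \gamma_{\what\tau}(\Delta_i))g(\Delta_i) - \theta$ as the paper's $(\star\star)+(6\star)$ yields.
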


\begin{proof}
First, we will show the result for when $e_1 = \dots = e_n = 0$, so that there is no score approximation error. This comes up when working with a randomized trial and using the IPW score, and also serves as the first step in showing the full result.

In this, we will abuse notation and write $\what{\gamma}(\Delta_i)$ in place of $\what{\gamma}_{\what{\tau}, -i}(\Delta_i)$, implicitly assuming that the estimator leaves out the observation associated with it's argument.

\paragraph{Step 1: No score approximation error}

Consider the following decomposition of $\sqrt{n}(\what{\theta}_{\mathrm{robust}}-\theta)$ into two terms,
\begin{equation*}
    \frac{1}{\sqrt{n}} \sum_{i=1}^n (\Gamma_i^\ast - \Delta_i)(\what{\gamma}(\Delta_i) - \Delta_i) - \theta = \underbrace{\frac{1}{\sqrt{n}} \sum_{i=1}^n (\Gamma_i^\ast - \Delta_i)(\what{\gamma}(\Delta_i) - \gamma(\Delta_i))}_{(\star)} + \underbrace{\frac{1}{\sqrt{n}} \sum_{i=1}^n (\Gamma_i^\ast - \Delta_i)(\gamma(\Delta_i) - \Delta_i) - \theta}_{(\star\star)}.
\end{equation*}
Term $(\star\star)$ is already in the form of an unbiased average of mean zero random variables, so we will set it aside for now. Term $(\ast)$ we will further decompose into such a component, as well as some lower-order terms. Recall that $\what{\gamma}_{\what{\tau}, -i}(\delta)$ is a binned average, written in (2.2), for $k$ such that $\delta \in R_k$. With this in mind, we can expand $(\star)$ as follows:
\begin{align*}
    (\star) &= \frac{1}{\sqrt{n}} \sum_{i=1}^n (\gamma(\Delta_i) - \Delta_i)(\what{\gamma}(\Delta_i) - \gamma(\Delta_i)) + \frac{1}{\sqrt{n}} \sum_{i=1}^n (\Gamma_i^\ast - \gamma(\Delta_i))(\what{\gamma}(\Delta_i) - \gamma(\Delta_i))
    \\
    &= \underbrace{\frac{1}{\sqrt{n}} \sum_{i=1}^n (\gamma(\Delta_i) - \Delta_i)\left(\frac{1}{|I_k|-1}\sum_{\substack{j \in I_k\\j\not= i}} \Gamma_i^\ast - \gamma(\Delta_i)\right)}_{(3 \star)} + \underbrace{\frac{1}{\sqrt{n}} \sum_{i=1}^n (\Gamma_i^\ast - \gamma(\Delta_i))(\what{\gamma}(\Delta_i) - \gamma(\Delta_i))}_{(4 \star)}
\end{align*}
Notice that the $j$-th term in a given set $I_k$ appears in $(3 \star)$ for all other $i \in I_k$, so carefully re-ordering the summation gives
\begin{align*}
    (3 \star) &= \frac{1}{\sqrt{n}} \sum_{i=1}^n \Gamma_i^\ast \left(\frac{1}{|I_k|} \sum_{\substack{j \in I_k \\ i \not= j}} (\gamma(\Delta_j) - \Delta_j)\right) - \frac{1}{\sqrt{n}}\sum_{i=1}^n \gamma(\Delta_i)(\gamma(\Delta_i) - \Delta_i) \\
    \intertext{Adding and subtracting the following term separates this into an error term $(5\star)$ and influence function component $(6\star)$:}
    &= \frac{1}{\sqrt{n}} \sum_{i=1}^n \Gamma_i^\ast \left(\frac{1}{|I_k|} \sum_{\substack{j \in I_k \\ i \not= j}} (\gamma(\Delta_j) - \Delta_j)\right) - \frac{1}{\sqrt{n}}\sum_{i=1}^n \gamma(\Delta_i)(\gamma(\Delta_i) - \Delta_i) \\
    &~\quad~\quad~\quad~\quad - \frac{1}{\sqrt{n}} \sum_{i=1}^n \Gamma_i^\ast (\gamma(\Delta_i) - \Delta_i) + \frac{1}{\sqrt{n}} \sum_{i=1}^n \Gamma_i^\ast (\gamma(\Delta_i) - \Delta_i) \\
    &= \underbrace{\frac{1}{\sqrt{n}} \sum_{i=1}^n \Gamma_i^\ast \left(\frac{1}{|I_k|} \sum_{\substack{j \in I_k \\ i \not= j}} (\gamma(\Delta_j) - \Delta_j) - \gamma(\Delta_i) + \Delta_i \right)}_{(5\star)} + \underbrace{\frac{1}{\sqrt{n}}\sum_{i=1}^n (\Gamma_i^\ast - \gamma(\Delta_i))(\gamma(\Delta_i) - \Delta_i)}_{(6\star)}.
\end{align*}
Showing that $(4\ast)$ and $(5\ast)$ converge in probability to $0$ will imply that $\sqrt{n}(\what{\theta}_{\mathrm{robust}} - \theta) = (\star\star) + (6\star) + o_P(1)$, completing the result.

Starting with $(4\ast)$, notice that this term is mean zero, by applying the tower rule conditioning on $\Delta_i$, observing that $\E[\Gamma_i^\ast - \gamma(\Delta_i) \mid \Delta_i] = 0$, and recalling that $\what{\gamma}(\Delta_i)$ is implicitly an estimator leaving out observation $i$, so it is independent of $\Gamma_i^\ast$ conditional on $\Delta_i$. Therefore, Chebyshev's inequality implies that for any $\epsilon > 0$
\begin{align*}
    P\left( \left|\frac{1}{\sqrt{n}} \sum_{i=1}^n (\Gamma_i^\ast - \gamma(\Delta_i))(\what{\gamma}(\Delta_i) - \gamma(\Delta_i))\right| > \epsilon \right) \le \frac{\var\left( \frac{1}{\sqrt{n}} \sum_{i=1}^n (\Gamma_i^\ast - \gamma(\Delta_i))(\what{\gamma}(\Delta_i) - \gamma(\Delta_i))\right)}{\epsilon^2}
\end{align*}
Letting $\bar{\gamma}(\delta) = \E[\what{\gamma}(\delta)] = \E[\Gamma_i^\ast \mid \Delta_i \in R_k],$ for $k$ such that $\delta \in R_k$, notice that
\begin{align*}
    \var\left( \frac{1}{\sqrt{n}} \sum_{i=1}^n (\Gamma_i^\ast - \gamma(\Delta_i))(\what{\gamma}(\Delta_i) - \gamma(\Delta_i))\right) &\le 2 \var\left( \frac{1}{\sqrt{n}} \sum_{i=1}^n (\Gamma_i^\ast - \gamma(\Delta_i))(\bar{\gamma}(\Delta_i) - \gamma(\Delta_i))\right)\\
    &~\quad~\quad +2\var\left( \frac{1}{\sqrt{n}} \sum_{i=1}^n (\Gamma_i^\ast - \gamma(\Delta_i))(\what{\gamma}(\Delta_i) - \bar{\gamma}(\Delta_i))\right).
\end{align*}
The first variance is bounded by $C \|\bar{\gamma}(\cdot) - \gamma(\cdot)\|_{2,P}^2$, where $\|\cdot\|_{2,P}^2$ is the mean integrated squared error (MISE) of the argument under measure $P$. Using standard results for the MISE of the binned estimator \citep{Chen19}, we know that $\|\bar{\gamma}(\cdot) - \gamma(\cdot)\|_{2,P}^2 = O(1/K^2 + K/n)$. The second is more subtle, but can be shown using results from $U$-statistics. Re-writing the sum over $i = 1,\dots,n$ as $\sum_{k=1}^K \sum_{i \in I_k}$, notice that within each $k$, we have (nearly) a $U$-statistic with $|I_k|$ observations,
\begin{equation*}
     \frac{1}{\sqrt{n}} \sum_{i=1}^n (\Gamma_i^\ast - \gamma(\Delta_i))(\what{\gamma}(\Delta_i) - \bar{\gamma}(\Delta_i)) = \frac{1}{\sqrt{n}} \sum_{k=1}^K |I_k|\frac{1}{(|I_k|-1)|I_k|} \sum_{i \in I_k} \sum_{\substack{j \in I_k \\ j \not= i}} (\Gamma_i^\ast - \gamma(\Delta_i))(\Gamma_j^\ast - \E[\Gamma_i^\ast \mid \Delta_i \in R_k]).
\end{equation*}
Using standard results for $U$-statistics \citep{Doksum08}, the term $\frac{1}{(|I_k|-1)|I_k|} \sum_{i \in I_k} \sum_{\substack{j \in I_k \\ j \not= i}} (\Gamma_i^\ast - \gamma(\Delta_i))(\Gamma_j^\ast - \E[\Gamma_i^\ast \mid \Delta_i \in R_k])$ has variance bounded by $C/((|I_k|-1)|I_k|) $. Each of these $K$ terms are independent, and so overall, we have
\begin{equation*}
    \var\left(\frac{1}{\sqrt{n}} \sum_{k=1}^K |I_k|\frac{1}{(|I_k|-1)|I_k|} \sum_{i \in I_k} \sum_{\substack{j \in I_k \\ j \not= i}} (\Gamma_i^\ast - \gamma(\Delta_i))(\Gamma_j^\ast - \E[\Gamma_i^\ast \mid \Delta_i \in R_k]) \right) \le \frac{K}{n} C.
\end{equation*}
By assumption, $K/n \to 0$. Therefore, altogether, we have
\begin{equation*}
    P\left( \left|\frac{1}{\sqrt{n}} \sum_{i=1}^n (\Gamma_i^\ast - \gamma(\Delta_i))(\what{\gamma}(\Delta_i) - \gamma(\Delta_i))\right| > \epsilon \right) \le \frac{CK/n + C(1/K^2 + K/n)}{\epsilon^2} \to 0.
\end{equation*}

Showing that $(5\star) \cp 0$ has 3 parts. Two of them are like what we did for $(4\star)$. Splitting this term into parts,
\begin{align*}
    (5\star) &= \frac{1}{\sqrt{n}} \sum_{i=1}^n (\Gamma_i^\ast - \gamma(\Delta_i) \left(\frac{1}{|I_k|} \sum_{\substack{j \in I_k \\ i \not= j}} (\gamma(\Delta_j) - \Delta_j) - \gamma(\Delta_i) + \Delta_i \right) \\
    &~\quad~\quad + \frac{1}{\sqrt{n}} \sum_{i=1}^n \gamma(\Delta_i) \left(\frac{1}{|I_k|} \sum_{\substack{j \in I_k \\ i \not= j}} (\gamma(\Delta_j) - \Delta_j) - \gamma(\Delta_i) + \Delta_i \right).
\end{align*}
The first term is mean $0$ and variance bounded by $C \|\what{\gamma}(\Delta_i) - \gamma(\Delta_i)\|_{2,P}^2 \to 0$, so by Chebyshev's inequality, it is asymptotically negligible. For the second term, we must decompose it again into a bias component and a variance component.
\begin{align*}
    &\frac{1}{\sqrt{n}} \sum_{i=1}^n \gamma(\Delta_i) \left(\frac{1}{|I_k|} \sum_{\substack{j \in I_k \\ i \not= j}} (\gamma(\Delta_j) - \Delta_j) - \gamma(\Delta_i) + \Delta_i \right)
    \\ &= \frac{1}{\sqrt{n}} \sum_{i=1}^n \gamma(\Delta_i) \left(\E[\gamma(\Delta) - \Delta \mid \Delta \in R_k] - \gamma(\Delta_i) + \Delta_i \right)\\
    &~\quad~ +\frac{1}{\sqrt{n}} \sum_{k=1}^K \sum_{i \in I_k} \gamma(\Delta_i) \left(\frac{1}{|I_k|} \sum_{\substack{j \in I_k \\ i \not= j}} (\gamma(\Delta_j) - \Delta_j) - \E[\gamma(\Delta) - \Delta \mid \Delta \in R_k] \right)
\end{align*}
By Cauchy-Schwarz,
\begin{align*}
    &\left| \frac{1}{\sqrt{n}} \sum_{i=1}^n \gamma(\Delta_i) \left(\E[\gamma(\Delta) - \Delta \mid \Delta \in R_k] - \gamma(\Delta_i) + \Delta_i \right) \right| 
    \\
    &\le \sqrt{ \left(\frac{1}{n} \sum_{i=1}^n \gamma^2(\Delta_i)\right)\left(\frac{n}{n}\sum_{k=1}^K \sum_{i \in I_k} ( \E[\gamma(\Delta) - \Delta \mid \Delta \in R_k] - \gamma(\Delta_i) + \Delta_i)^2 \right)}
\intertext{and because $\delta \mapsto \gamma(\delta) + \delta$ is $L+1$-Lipschitz,}
    &\le \sqrt{ \left(\frac{1}{n} \sum_{i=1}^n \gamma^2(\Delta_i)\right)\left(\frac{n}{n}\sum_{k=1}^K\sum_{i \in I_k} |R_k|^2(L+1)^2 \right)}.
\end{align*}
Noting that by the construction of the binning estimator, and the fact that the distribution of $\Delta_i$ was assumed to be equivalent (up to a scale factor) to the uniform distribution, $\exists c > 0$ and a random variable $S$ such that $S \sim \mathrm{Beta}(n/K, n - n/K + 1)$ and $cS < |R_k| < S/c$, almost surely. Therefore, $n \E[|R_k|^2] \le n/(cK^2) \to 0$ (because we assumed $n/K^2 \to 0$), and so by Markov's inequality, this term is $o_P(1)$.

Finally, using a similar $U$-statistic argument as above, the final remaining term is unbiased and has variance bounded by $C K/n$, so by Chebyshev's inequality it is asymptotically negligible.

\paragraph{Step 2: Addressing score approximation error}
The approximation error term is
\begin{align*}
    \frac{1}{\sqrt{n}}\sum_{i=1}^n e_i\left( \frac{1}{|I_k|-1}\sum_{\substack{j \in I_k\\i \not= j}}\Gamma_j^\ast + e_j - \Delta_i\right) + \frac{1}{\sqrt{n}}\sum_{i=1}^n \left(\Gamma_i^\ast - \Delta_i\right)\left(\frac{1}{|I_k|-1}\sum_{\substack{j\in I_k\\i \not= j}} e_j\right).
\end{align*}
Noting that due to the symmetry of the binning estimator, these two terms are substantially equivalent, we show here only that the first term is asymptotically negligible.

To this end, expand
\begin{align*}
    \frac{1}{\sqrt{n}}\sum_{i=1}^n e_i\left( \frac{1}{|I_k|-1}\sum_{\substack{j \in I_k\\i \not= j}}\Gamma_j^\ast + e_j - \Delta_i\right) &= \frac{1}{\sqrt{n}}\sum_{i=1}^n e_i\left( \frac{1}{|I_k|-1}\sum_{\substack{j \in I_k\\i \not= j}}\Gamma_j^\ast - \E[\Gamma_j^\ast \mid \Delta_j \in R_k] \right) \\
    &~\quad~ + \frac{1}{\sqrt{n}}\sum_{i=1}^n e_i\left(  \E[\Gamma_j^\ast \mid \Delta_j \in R_k] - \Delta_i \right) \\
    &~\quad~ + \frac{1}{\sqrt{n}}\sum_{i=1}^n \frac{1}{|I_k|-1}\sum_{\substack{j \in I_k\\i \not= j}} e_i e_j
\end{align*}
The first term is mean 0, and has variance bounded by $\E[C/n \sum_{i=1}^n e_i^2 / |I_k| \mid D_2] \cp 0$, so by Chebyshev's inequality it is lower order. The bias of the second term is lower order, because
\begin{equation*}
    \E\left[\frac{\sqrt{n}}{n} \sum_{i=1}^n e_i (\E[\Gamma_i^\ast \mid \Delta_i \in R_k] - \Delta_i) \mid D_2 \right] \le |\sqrt{n}\E\left[e_i (\tau(X_i) - \Delta_i) \mid D_2 \right]| + |\sqrt{n}\E\left[e_i 2L|R_K| \mid D_2 \right]| \cp 0
\end{equation*}
by assumption. The variance is lower order because conditional on $D_2$, the errors are independent, so
\begin{align*}
    \E\left[ \left(\frac{1}{\sqrt{n}} \sum_{i=1}^n e_i (\E[\Gamma_i^\ast \mid \Delta_i \in R_k] - \Delta_i) \right)^2 \right] &= \frac{1}{n}\sum_{i=1}^n \E\left[ e_i^2 (\E[\Gamma_i^\ast \mid \Delta_i \in R_k ] - \Delta_i)^2 \mid D_2 \right]\\
    &~\quad~+ \frac{1}{n}\sum_{i \not= j}\E\left[ e_i (\E[\Gamma_i^\ast \mid \Delta_i \in R_k] - \Delta_i) \mid D_2 \right]\cdot\\
    &~\quad~\quad~\quad~\quad~\quad~\quad~\quad~\quad \E\left[ e_i (\E[\Gamma_i^\ast \mid \Delta_i \in R_k] - \Delta_i) \mid D_2 \right] \\
    &\le o_P(1) + \frac{1}{n^2}\sum_{i\not= j}\sqrt{n} \E\left[ e_i (\E[\Gamma_i^\ast \mid \Delta_i \in R_k] - \Delta_i) \mid D_2 \right]\cdot \\
    &~\quad~\quad~\quad~\quad~\quad~\quad~\quad~\quad \sqrt{n} \E\left[ e_i (\E[\Gamma_i^\ast \mid \Delta_i \in R_k] - \Delta_i) \mid D_2 \right] \\
    &\le o_P(1) + \left(\frac{1}{n}\sum_{i=1}^n \sqrt{n} \E\left[ e_i (\E[\Gamma_i^\ast \mid \Delta_i \in R_k] - \Delta_i) \mid D_2 \right] \right)^2 \\
    &= o_P(1),
\end{align*}
as before. The final term follows from yet another application of Chebyshev's inequality, noticing that only the terms within one bin have nonzero cross terms, and applying the same analysis as above within each bin.
\end{proof}

\subsection{APPROXIMATION ERROR}
Here, we describe how to meet the needed approximation error terms of the asymptotic linearity result above using cross-fitting and the AIPW estimator. When performing cross-fitting, we split the data into two independent and identically distributed pieces, $D_1$ and $D_2$, and use $D_2$ to fit the propensity score estimator $\what{\pi}(\cdot)$ and outcome model $\what\mu(\cdot, \cdot)$. Then, we apply the estimated nuisance parameters to the estimator the $D_1$. This can be repeated in reverse to use the full sample size for the estimator, but still maintain the independence of the nuisance components.

With this in mind, we will check the approximation error conditions conditional on $D_2$, so that we can treat $\what{\pi}(\cdot)$ and $\what{\mu}(\cdot, \cdot)$ as fixed, and then assume that when trained on $D_2$, they achieve certain necessary estimation error properties with high probability.

The AIPW estimation error can be written as follows:
\begin{equation*}
    \what{\mu}(X_i, 1) - \mu(X_i, 1) - \what{\mu}(X_i, 0) + \mu(X_i, 0) + \frac{W_i - \what{\pi}(X_i)}{\what\pi(X_i)(1-\what\pi(X_i))}(Y_i - \what{\mu}(X_i, W_i)) - \frac{W_i - \pi(X_i)}{\pi(X_i)(1-\pi(X_i))}(Y_i - {\mu}(X_i, W_i)).
\end{equation*}
We will focus only on the term
\begin{equation*}
    e_i \defeq \what{\mu}(X_i, 1) - \mu(X_i, 1) + \frac{W_i}{\what{\pi}(X_i)}(Y_i - \what{\mu}(X_i, 1)) - \frac{W_i}{{\pi}(X_i)}(Y_i - {\mu}(X_i, 1)),
\end{equation*}
as the term for $\E[Y(0) \mid X=x]$ is identical.
Notice that
\begin{equation*}
    \E[e_i \mid X_i] = \what{\mu}(X_i, 1) - \mu(X_i, 1) + \frac{\pi(X_i)}{\what{\pi}(X_i)}(\mu(X_i, 1) - \what{\mu}(X_i, 1)) = \left(\frac{\what\pi(X_i) - \pi(X_i)}{\what{\pi}(X_i)}\right)(\mu(X_i, 1) - \what{\mu}(X_i, 1)).
\end{equation*}
As long as $0 < \epsilon \le \what{\pi}(X_i) \le 1-\epsilon$, then
\begin{equation*}
    \sqrt{n} \E[e_i (\Gamma_i^\ast + \Delta_i)] \le \frac{C}{\epsilon} \sqrt{n} \|\what\pi(\cdot) - \pi(\cdot)\|_{2q,P} \|(\mu(\cdot, 1) - \what{\mu}(\cdot, 1)\|_{2p,P},
\end{equation*}
where $1/q + 1/p = 1$. This can be achieved as long as $\what{\pi}$ and $\what{\mu}$ can be estimated at $n^{-1/4}$ rates, as is standard in semiparametric statistics \citep{ChernozhukovChDeDuHaNeRo18}.

Similarly,
\begin{equation*}
    \E[e_i^2 \mid X_i] \le 2(\what{\mu}(X_i, 1) - \mu(X_i, 1))^2 + 4 W_i\left( \left( \frac{1}{\what{\pi}(X_i)} - \frac{1}{\pi(X_i)}\right)^2 Y_i^2 + \left(\frac{\mu(X_i, 1)}{\pi(X_i)} - \frac{\what\mu(X_i, 1)}{\what\pi(X_i)}\right)^2 \right),
\end{equation*}
which if $\|\what\mu(\cdot, 1) - \mu(\cdot, 1)\|_2 \to 0$ and $\|\what\pi(\cdot) - \pi(\cdot)\|_2 \to 0$, then $E[e_i^2] = o(1)$.

\section{ADDITIONAL EXPERIMENTS}
\begin{table}[ht]
\caption{Performance of the Calibration Error Estimators under Observational Study Settings. The calibration function is estimated using an AIPW score with incorrectly specified propensity scores. Under this setup, the robust estimator (bottom half of the table) outperformed the plug-in estimator (top half of the table) across all scenarios.}
\label{t:obs-aipw-psW}
\begin{center}
\begin{tabular}{llcccc}
$\mathbf{\alpha}$ &	\textbf{N} & \textbf{Bias} & \textbf{S.E.} & \textbf{S.bias} & \textbf{MSE} \\ 
\hline \multicolumn{6}{c}{Plug-in estimator}\\
\hline
\multirow{4}{*}{0} & 500 & 0.2121&	0.0609&	3.4843&	0.0487\\
& 1000 &0.1288&	0.0348&	3.6988&	0.0178\\
& 2000 &0.0831&	0.0191&	4.3638&	0.0073\\
& 4000 &0.0525&	0.0103&	5.0789&	0.0029\\ \hline
\multirow{4}{*}{0.15} & 500 &0.2182&	0.0676&	3.2290&	0.0522\\
& 1000 &0.1330&	0.0364&	3.6567&	0.0190\\
& 2000 &0.0847&	0.0215&	3.9423&	0.0076\\
& 4000 &0.0537&	0.0130&	4.1244&	0.0031\\ \hline
\multirow{4}{*}{0.15} & 500 &0.2131&	0.0709&	3.0058&	0.0505\\
& 1000 &0.1304&	0.0425&	3.0678&	0.0188\\
& 2000 &0.0836&	0.0271&	3.0893&	0.0077\\
& 4000 &0.0541&	0.0170&	3.1857&	0.0032\\
\hline \hline
\multicolumn{6}{c}{Robust estimator}
\\
\hline
\multirow{4}{*}{0} & 500  & -0.0086&	0.0624&	-0.1374&	0.0040\\
& 1000 &-0.0058&	0.0355&	-0.1647&	0.0013\\
& 2000 &-0.0016&	0.0196&	-0.0806&	0.0004\\
& 4000 &-0.0006&	0.0105&	-0.0550&	0.0001\\ \hline
\multirow{4}{*}{0.15} & 500 &-0.0014&	0.0693&	-0.0208&	0.0048\\
& 1000 &-0.0011&	0.0373&	-0.0293&	0.0014\\
& 2000 &0.0004&	0.0218&	0.0164&	0.0005\\
& 4000 &0.0009&	0.0132&	0.0661&	0.0002\\ \hline
\multirow{4}{*}{0.3} & 500 &-0.0053&	0.0719&	-0.0733&	0.0052\\
& 1000 &-0.0031&	0.0432&	-0.0709&	0.0019\\
& 2000 &-0.0003&	0.0273&	-0.0107&	0.0007\\
& 4000 &0.0015&	0.0171&	0.0853&	0.0003\\
\end{tabular}
\end{center}
\end{table}

\begin{table}[ht]
\caption{Performance of the Calibration Error Estimators under Observational Study Settings. The calibration function is estimated using an IPW score with correctly specified propensity scores. Under this setup, the robust estimator (bottom half of the table) outperformed the plug-in estimator (top half of the table) across all scenarios.}
\label{t:obs-ipw-psC}
\begin{center}
\begin{tabular}{llcccc}
$\mathbf{\alpha}$ &	\textbf{N} & \textbf{Bias} & \textbf{S.E.} & \textbf{S.bias} & \textbf{MSE} \\ 
\hline \multicolumn{6}{c}{Plug-in estimator}\\
\hline
\multirow{4}{*}{0} & 500 & 0.3711&	0.1135&	3.2713&	0.1506\\
& 1000 &0.2292&	0.0631&	3.6329&	0.0565\\
& 2000 &0.1522&	0.0344&	4.4249&	0.0243\\
& 4000 &0.1012&	0.0202&	5.0027&	0.0106\\ \hline
\multirow{4}{*}{0.15} & 500 & 0.3578& 0.1118&	3.2013&	0.1405\\
& 1000 &0.2301&	0.0647&	3.5560&	0.0571\\
& 2000 &0.1507&	0.0366&	4.1139&	0.0240\\
& 4000 &0.0994&	0.0223&	4.4690&	0.0104\\ \hline
\multirow{4}{*}{0.3} & 500 & 0.3559&	0.1252&	2.8424&	0.1423\\
& 1000 &0.2235&	0.0696&	3.2087&	0.0548\\
& 2000 &0.1499&	0.0433&	3.4586&	0.0243\\
& 4000 &0.0982&	0.0277&	3.5450&	0.0104\\
\hline \hline
\multicolumn{6}{c}{Robust estimator}
\\
\hline
\multirow{4}{*}{0} & 500  & 0.0011&	0.1149&	0.0091&	0.0132\\
& 1000 &-0.0041&	0.0639&	-0.0634&	0.0041\\
& 2000 &-0.0019&	0.0349&	-0.0548	&0.0012\\
& 4000 &0.0009&	0.0207&	0.0419&	0.0004\\ \hline
\multirow{4}{*}{0.15} & 500 &-0.0064&	0.1147&	-0.0559&	0.0132\\
& 1000 &0.0008&	0.0659&	0.0127&	0.0043\\
& 2000 &-0.0010&	0.0373&	-0.0259&	0.0014\\
& 4000 &0.0008&	0.0225&	0.0354&	0.0005\\ \hline
\multirow{4}{*}{0.3} & 500 &-0.0034&	0.1280&	-0.0269&	0.0164\\
& 1000 &-0.0028&	0.0711&	-0.0399&	0.0051\\
& 2000 &-0.0007&	0.0439&	-0.0160&	0.0019\\
& 4000 &0.0004&	0.0280&	0.0136&	0.0008\\
\end{tabular}
\end{center}
\end{table}

\begin{table}[ht]
\caption{Performance of the Calibration Error Estimators under Observational Study Settings. The calibration function is estimated using an IPW score with incorrectly specified propensity scores. Under this setup, the robust estimator (bottom half of the table) outperformed the plug-in estimator (top half of the table) across all scenarios.}
\label{t:obs-ipw-psW}
\begin{center}
\begin{tabular}{llcccc}
$\mathbf{\alpha}$ &	\textbf{N} & \textbf{Bias} & \textbf{S.E.} & \textbf{S.bias} & \textbf{MSE} \\ 
\hline \multicolumn{6}{c}{Plug-in estimator}\\
\hline
\multirow{4}{*}{0} & 500	&0.3585&	0.1078&	3.3251&	0.1401\\
& 1000	&0.2367&	0.0615&	3.8484&	0.0598\\
& 2000	&0.1636&	0.0383&	4.2694&	0.0282\\
& 4000	&0.1121&	0.0218&	5.1388&	0.0131\\ \hline
\multirow{4}{*}{0.15}	&500	&0.3726&	0.1176&	3.1674&	0.1527\\
&1000	&0.2388&	0.0653&	3.6550&	0.0613\\
&2000	&0.1672&	0.0395&	4.2346&	0.0295\\
&4000	&0.1191&	0.0265&	4.4927&	0.0149\\ \hline
\multirow{4}{*}{0.3}	&500	&0.3563&	0.1181&	3.0172&	0.1409\\
&1000	&0.2451&	0.0728&	3.3673&	0.0654\\
&2000	&0.1680&    0.0473&	3.5550&	0.0305\\
&4000	&0.1178&	0.0315&	3.7434&	0.0149\\
\hline \hline
\multicolumn{6}{c}{Robust estimator}
\\
\hline
\multirow{4}{*}{0}	&500	&0.0047&	0.1099&	0.0431&	0.0121\\
&1000	&0.0121&	0.0624&	0.1939&	0.0040\\
&2000	&0.0142&	0.0390&	0.3646&	0.0017\\
&4000	&0.0147&	0.0219&	0.6725&	0.0007\\ \hline
\multirow{4}{*}{0.15}	&500	&0.0230&	0.1214&	0.1893&	0.0153\\
&1000	&0.0167&	0.0662&	0.2516&	0.0047\\
&2000	&0.0185&	0.0398&	0.4643&	0.0019\\
&4000	&0.0224&	0.0266&	0.8413&	0.0012\\ \hline
\multirow{4}{*}{0.3}	&500	&0.0094&	0.1200&	0.0782&	0.0145\\
&1000	&0.0236	&0.0737&	0.3205&	0.0060\\
&2000	&0.0202	&0.0478&	0.4239&	0.0027\\
&4000	&0.0213 &0.0315&	0.6783&	0.0014\\
\end{tabular}
\end{center}
\end{table}

\begin{table}[ht]
\caption{Performance of the Calibration Error Estimators under Randomized Trial Settings. The calibration function is estimated using an IPW score. Under this setup, the robust estimator (bottom half of the table) outperformed the plug-in estimator (top half of the table) across all scenarios.}
\label{t:rct-ipw}
\begin{center}
\begin{tabular}{llcccc}
$\mathbf{\alpha}$ &	\textbf{N} & \textbf{Bias} & \textbf{S.E.} & \textbf{S.bias} & \textbf{MSE} \\ 
\hline \multicolumn{6}{c}{Plug-in estimator}\\
\hline
\multirow{4}{*}{0} &	500	&0.3458&	0.1055&	3.2765&	0.1307\\
&1000	&0.2217&	0.0579&	3.8268&	0.0525\\
&2000	&0.1486&	0.0349&	4.2592&	0.0233\\
&4000	&0.0981&	0.0200&	4.9073&	0.0100\\ \hline
\multirow{4}{*}{0.15} &	500	&0.3413&	0.1061&	3.2177&	0.1278\\
&1000	&0.2210&	0.0622&	3.5512&	0.0527\\
&2000	&0.1479&	0.0391&	3.7870&	0.0234\\
&4000	&0.0948&	0.0215&	4.4060&	0.0095\\ \hline
\multirow{4}{*}{0.3} &	500	&0.3414&	0.1224&	2.7889&	0.1315\\
&1000	&0.2201&	0.0723&	3.0440&	0.0537\\
&2000	&0.1431&	0.0439&	3.2613&	0.0224\\
&4000	&0.0940&	0.0280&	3.3554&	0.0096\\
\hline \hline
\multicolumn{6}{c}{Robust estimator}
\\
\hline
\multirow{4}{*}{0} &500	&-0.0039&	0.1079&	-0.0361&	0.0117\\
&1000	&-0.0020&	0.0586&	-0.0345&	0.0034\\
&2000	&-0.0004&	0.0351&	-0.0123&	0.0012\\
&4000	&0.0010&	0.0201&	0.0481&	0.0004\\ \hline
\multirow{4}{*}{0.15} &	500	&-0.0043&	0.1082&	-0.0395&	0.0117\\
&1000	&0.0001&	0.0626&	0.0013&	0.0039\\
&2000	&0.0005&	0.0393&	0.0129&	0.0015\\
&4000	&-0.0013&	0.0217&	-0.0594&	0.0005\\ \hline
\multirow{4}{*}{0.3} &	500	&0.0006&	0.1238&	0.0048&	0.0153\\
&1000	&0.0010&	0.0725&	0.0135&	0.0053\\
&2000	&-0.0026&	0.0443&	-0.0581&	0.0020\\
&4000	&-0.0013&	0.0282&	-0.0458&	0.0008\\
\end{tabular}
\end{center}
\end{table}

\begin{table}[ht]
\caption{Performance of the Calibration Error Estimators under Randomized Trial Settings. The calibration function is estimated using an AIPW score. Under this setup, the robust estimator (bottom half of the table) outperformed the plug-in estimator (top half of the table) across all scenarios.}
\label{t:rct-aipw}
\begin{center}
\begin{tabular}{llcccc}
$\mathbf{\alpha}$ &	\textbf{N} & \textbf{Bias} & \textbf{S.E.} & \textbf{S.bias} & \textbf{MSE} \\ 
\hline \multicolumn{6}{c}{Plug-in estimator}\\
\hline
\multirow{4}{*}{0} &	500	&0.2241&	0.0724&	3.0975&	0.0555\\
&1000	&0.1358&	0.0387&	3.5105&	0.0199\\
&2000	&0.0857&	0.0205&	4.1794&	0.0078\\
&4000	&0.0545&	0.0113&	4.8006&	0.0031\\ \hline
\multirow{4}{*}{0.15} &	500	&0.2148&	0.0683&	3.1446&	0.0508\\
&1000	&0.1273&	0.0377&	3.3774&	0.0176\\
&2000	&0.0818&	0.0207&	3.9578&	0.0071\\
&4000	&0.0511&	0.0129&	3.9475&	0.0028\\ \hline
\multirow{4}{*}{0.3} &500	&0.2042&	0.0775&	2.6351&	0.0477\\
&1000	&0.1222&	0.0438&	2.7858&	0.0168\\
&2000	&0.0791&	0.0274&	2.8834&	0.0070\\
&4000	&0.0487&	0.0181&	2.6956&	0.0027\\
\hline \hline
\multicolumn{6}{c}{Robust estimator}
\\
\hline
\multirow{4}{*}{0} &500	&-0.0013&	0.0729&	-0.0175&	0.0053\\
&1000	&0.0005&	0.0393&	0.0121&	0.0015\\
&2000	&-0.0002&	0.0207&	-0.0098& 0.0004\\
&4000	&0.0003&	0.0114&	0.0242&	0.0001\\ \hline
\multirow{4}{*}{0.15} &500	&-0.0056&	0.0700&	-0.0798	& 0.0049\\
&1000	&-0.0055&	0.0384&	-0.1434&	0.0015\\
&2000	&-0.0022&	0.0208&	-0.1039&	0.0004\\
&4000	&-0.0019&	0.0130&	-0.1466&	0.0002\\ \hline
\multirow{4}{*}{0.3} &500	&-0.0122&	0.0785&	-0.1550&	0.0063\\
&1000	&-0.0085&	0.0442&	-0.1914&	0.0020\\
&2000	&-0.0035&	0.0277&	-0.1263&	0.0008\\
&4000	&-0.0033&	0.0181&	-0.1843&	0.0003\\
\end{tabular}
\end{center}
\end{table}

\begin{table}[ht]
\caption{Performance of the robust ECETH estimator under the high-dimensional settings. Assuming an observational study setting with three levels of miscalibration ($\alpha$). An AIPW score is used for the calibration function with correctly specified nuisance parameters. The robust estimator performed roughly as well as in low dimension situations ($P=2$).} 
\label{t:HD-obs-aipw-psC}
\begin{center}
\begin{tabular}{lllcccccccc}
&&&\multicolumn{4}{c}{Robust estimator}&\multicolumn{4}{c}{Plug-in estimator}\\\cline{4-7}\cline{8-11}
\textbf{$\alpha$}&\textbf{N}&	\textbf{P} & \textbf{Bias} & \textbf{S.E.} & \textbf{S.bias} & \textbf{MSE}& \textbf{Bias} & \textbf{S.E.} & \textbf{S.bias} & \textbf{MSE} \\ 
\hline\\
\multirow{10}{*}{0} &\multirow{1}{*}{500}  & 50  &0.0005&	0.0704&	0.0072&	0.0050&	0.2187&	0.0694&	3.1502&0.0526\\ \cline{2-11}
&\multirow{2}{*}{1000} & 50  &0.0011&	0.0364&	0.0306&	0.0013&	0.1337&	0.0358&	3.7374&0.0191\\
&& 100 & 0.0030&	0.0358&	0.0833&	0.0013&	0.1390&	0.0352&	3.9494&0.0206\\ \cline{2-11}
&\multirow{3}{*}{2000} & 50 &0.0003&	0.0202&	0.0169&	0.0004&	0.0859&	0.0200&	4.2894&0.0078\\
&& 100	&0.0003&	0.0212&	0.0160&	0.0004&	0.0872&	0.0207&	4.2051&0.0080\\
&& 200	&0.0035&	0.0228&	0.1516&	0.0005&	0.0982&	0.0223&	4.4078&0.0101\\ \cline{2-11}
&\multirow{5}{*}{4000} & 50 &0.0000&	0.0111&	-0.0027&	0.0001&	0.0549&	0.0110&	4.9765&0.0031\\
&& 100	&0.0009&	0.0113&	0.0833&	0.0001&	0.0563&	0.0112&	5.0356&0.0033\\
&& 200	&0.0022&	0.0130&	0.1691&	0.0002&	0.0618&	0.0128&	4.8239&0.0040\\
&& 400	&0.0043&	0.0147&	0.2931&	0.0002&	0.0713&	0.0147&	4.8649&0.0053\\ \hline
\multirow{10}{*}{0.15} &\multirow{1}{*}{500}  & 50 & 0.0053&	0.0702&	0.0759&	0.0050&	0.2205&	0.0687&	3.2085&0.0534\\ \cline{2-11}
&\multirow{2}{*}{1000} & 50  &0.0001&	0.0366&	0.0037&	0.0013&	0.1312&	0.0358&	3.6634&0.0185\\
&& 100	&0.0060&	0.0412&	0.1454&	0.0017&	0.1407&	0.0407&	3.4572&0.0214\\ \cline{2-11}
&\multirow{3}{*}{2000} & 50 &0.0017&	0.0228&	0.0738&	0.0005&	0.0860&	0.0224&	3.8387&0.0079\\
&& 100	&0.0044&	0.0232&	0.1919&	0.0006&	0.0902&	0.0228&	3.9506&0.0087\\
&& 200	&0.0073&	0.0257&	0.2854&	0.0007&	0.1011&	0.0254&	3.9858&0.0109\\ \cline{2-11}
&\multirow{5}{*}{4000} & 50 &0.0008&	0.0134&	0.0616&	0.0002&	0.0549&	0.0132&	4.1510&0.0032\\
&& 100	&0.0035&	0.0136&	0.2569&	0.0002&	0.0579&	0.0135&	4.2839&0.0035\\
&& 200	&0.0070&	0.0150&	0.4639&	0.0003&	0.0659&	0.0149&	4.4134&0.0046\\
&& 400	&0.0102&	0.0166&	0.6134&	0.0004&	0.0767&	0.0165&	4.6576&0.0061\\ \hline
\multirow{10}{*}{0.3} &\multirow{1}{*}{500}  & 50&  0.0050&	0.0779&	0.0637&	0.0061&	0.2187&	0.0770&	2.8400&0.0537\\ \cline{2-11}
&\multirow{2}{*}{1000} & 50  &0.0005&	0.0439&	0.0106&	0.0019&	0.1309&	0.0432&	3.0280&0.0190\\
&& 100	&0.0071&	0.0472&	0.1507&	0.0023&	0.1404&	0.0469&	2.9933&0.0219\\\cline{2-11}
&\multirow{3}{*}{2000} & 50 &0.0000&	0.0272&	-0.0001&	0.0007&	0.0836&	0.0271&	3.0825&0.0077\\
&& 100	&0.0058&	0.0286&	0.2032&	0.0009&	0.0907&	0.0287&	3.1637&0.0091\\
&& 200	&0.0097&	0.0311&	0.3133&	0.0011&	0.1031&	0.0311&	3.3200&0.0116\\\cline{2-11}
&\multirow{5}{*}{4000} & 50 &0.0004&	0.0173&	0.0245&	0.0003&	0.0538&	0.0171&	3.1457&0.0032\\
&& 100	&0.0023&	0.0174&	0.1347&	0.0003&	0.0564&	0.0173&	3.2568&0.0035\\
&& 200	&0.0078&	0.0192&	0.4045&	0.0004&	0.0663&	0.0191&	3.4690&0.0048\\
&& 400	&0.0101&	0.0216&	0.4683&	0.0006&	0.0762&	0.0215&	3.5361&0.0063\\ \hline
\end{tabular}
\end{center}
\end{table}

\begin{table}[ht]
\caption{Performance of the robust ECETH estimator under the high-dimensional settings. Assuming an observational study setting with three levels of miscalibration ($\alpha$). An AIPW score is used for the calibration function with incorrectly specified propensity scores. The robust estimator performed roughly as well as in low dimension situations ($P=2$).} 
\label{t:HD-obs-aipw-psW}
\begin{center}
\begin{tabular}{lllcccccccc}
&&&\multicolumn{4}{c}{Robust estimator}&\multicolumn{4}{c}{Plug-in estimator}\\\cline{4-7}\cline{8-11}
\textbf{$\alpha$}&\textbf{N}&	\textbf{P} & \textbf{Bias} & \textbf{S.E.} & \textbf{S.bias} & \textbf{MSE}& \textbf{Bias} & \textbf{S.E.} & \textbf{S.bias} & \textbf{MSE} \\ 
\hline\\
\multirow{10}{*}{0} &\multirow{1}{*}{500}  & 50  &0.0060&	0.0621&	0.0973&	0.0039&	0.2010&	0.0613&	3.2760&	0.0442\\ \cline{2-11}
&\multirow{2}{*}{1000} & 50  &0.0043&	0.0355&	0.1222&	0.0013&	0.1254&	0.0347&	3.6121&	0.0169\\
&& 100	&0.0051&	0.0370&	0.1391&	0.0014&	0.1339&	0.0361&	3.7068&	0.0192\\ \cline{2-11}
&\multirow{3}{*}{2000} & 50 &0.0029&	0.0191&	0.1517&	0.0004&	0.0825&	0.0188&	4.3884&	0.0072\\
&& 100	&0.0037&	0.0205&	0.1788&	0.0004&	0.0865&	0.0201&	4.3008&	0.0079\\
&& 200	&0.0082&	0.0238&	0.3431&	0.0006&	0.1001&	0.0235&	4.2629&	0.0106\\ \cline{2-11}
&\multirow{5}{*}{4000} & 50 &0.0024&	0.0113&	0.2095&	0.0001&	0.0543&	0.0112&	4.8714&	0.0031\\
&& 100	&0.0034&	0.0115&	0.2982&	0.0001&	0.0567&	0.0113&	5.0256&	0.0033\\
&& 200	&0.0059&	0.0133&	0.4426&	0.0002&	0.0642&	0.0131&	4.8898&	0.0043\\
&& 400	&0.0081&	0.0152&	0.5315&	0.0003&	0.0741&	0.0151&	4.9215&	0.0057\\ \hline
\multirow{10}{*}{0.15} &\multirow{1}{*}{500}  & 50 & 0.0109&	0.0647&	0.1679&	0.0043&	0.2039&	0.0639&	3.1906&	0.0457\\ \cline{2-11}
&\multirow{2}{*}{1000} & 50 & 0.0074&	0.0359&	0.2053&	0.0013&	0.1278&	0.0353&	3.6244&	0.0176\\
&& 100	&0.0132&	0.0397&	0.3335&	0.0018&	0.1407&	0.0392&	3.5890&	0.0213\\ \cline{2-11}
&\multirow{3}{*}{2000} & 50& 0.0078&	0.0218&	0.3602&	0.0005&	0.0863&	0.0216&	3.9937&	0.0079\\
&& 100	&0.0104&	0.0231&	0.4502&	0.0006&	0.0924&	0.0229&	4.0322&	0.0091\\
&& 200	&0.0138&	0.0253&	0.5456&	0.0008&	0.1052&	0.0252&	4.1772&	0.0117\\ \cline{2-11}
&\multirow{5}{*}{4000} & 50 &0.0078&	0.0136&	0.5721&	0.0002&	0.0589&	0.0135&	4.3486&	0.0036\\
&& 100	&0.0114&	0.0145&	0.7873&	0.0003&	0.0641&	0.0144&	4.4453&	0.0043\\
&& 200	&0.0137&	0.0159&	0.8606&	0.0004&	0.0714&	0.0158&	4.5194&	0.0053\\
&& 400	&0.0164&	0.0180&	0.9144&	0.0006&	0.0819&	0.0179&	4.5761&	0.0070\\ \hline
\multirow{10}{*}{0.3} &\multirow{1}{*}{500}  & 50&  0.0119&	0.0729&	0.1631&	0.0055&	0.2049&	0.0706&	2.9023&	0.0470\\ \cline{2-11}
&\multirow{2}{*}{1000} & 50  &0.0103&	0.0416&	0.2469&	0.0018&	0.1297&	0.0413&	3.1377&	0.0185\\
&& 100	&0.0155&	0.0468&	0.3316&	0.0024&	0.1427&	0.0462&	3.0876&	0.0225\\\cline{2-11}
&\multirow{3}{*}{2000} & 50 &0.0091&	0.0276&	0.3293&	0.0008&	0.0874&	0.0275&	3.1720&	0.0084\\
&& 100	&0.0126&	0.0282&	0.4476&	0.0010&	0.0944&	0.0281&	3.3603&	0.0097\\
&& 200	&0.0149&	0.0306&	0.4873&	0.0012&	0.1056&	0.0302&	3.4982&	0.0121\\\cline{2-11}
&\multirow{5}{*}{4000} & 50 &0.0090&	0.0174&	0.5184&	0.0004&	0.0598&	0.0174&	3.4292&	0.0039\\
&& 100	&0.0111&	0.0184&	0.6002&	0.0005&	0.0635&	0.0184&	3.4505&	0.0044\\
&& 200	&0.0147&	0.0195&	0.7536&	0.0006&	0.0722&	0.0195&	3.7037&	0.0056\\
&& 400	&0.0164&	0.0225&	0.7281&	0.0008&	0.0817&	0.0225&	3.6292&	0.0072\\ \hline
\end{tabular}
\end{center}
\end{table}

\begin{table}[ht]
\caption{Performance of the robust ECETH estimator under the high-dimensional settings. Assuming an randomized trial setting with three levels of miscalibration ($\alpha$). An AIPW score is used for the calibration function with correctly specified nuisance parameters. The robust estimator performed roughly as well as in low dimension situations ($P=2$).} 
\label{t:HD-rct-aipw}
\begin{center}
\begin{tabular}{lllcccccccc}
&&&\multicolumn{4}{c}{Robust estimator}&\multicolumn{4}{c}{Plug-in estimator}\\\cline{4-7}\cline{8-11}
\textbf{$\alpha$}&\textbf{N}&	\textbf{P} & \textbf{Bias} & \textbf{S.E.} & \textbf{S.bias} & \textbf{MSE}& \textbf{Bias} & \textbf{S.E.} & \textbf{S.bias} & \textbf{MSE} \\ 
\hline\\
\multirow{10}{*}{0} &\multirow{1}{*}{500}  & 50  &-0.0029&	0.0597&	-0.0480&	0.0036&	0.1975&	0.0584&	3.3827&	0.0424\\ \cline{2-11}
&\multirow{2}{*}{1000} & 50  &0.0009&	0.0355&	0.0255&	0.0013&	0.1256&	0.0354&	3.5452&	0.0170\\
&& 100 & 0.0009&	0.0363&	0.0257&	0.0013&	0.1321&	0.0358&	3.6948&	0.0187\\ \cline{2-11}
&\multirow{3}{*}{2000} & 50 &0.0008&	0.0193&	0.0403&	0.0004&	0.0820&	0.0193&	4.2593&	0.0071\\
&& 100 & 0.0004&	0.0196&	0.0179&	0.0004&	0.0850&	0.0194&	4.3681&	0.0076\\
&& 200 & -0.0017&	0.0216&	-0.0787&	0.0005&	0.0917&	0.0212&	4.3286&	0.0089\\ \cline{2-11}
&\multirow{5}{*}{4000} & 50 &-0.0001&	0.0107&	-0.0109&	0.0001&	0.0520&	0.0106&	4.8869&	0.0028\\
&& 100 & -0.0002&	0.0111&	-0.0207&	0.0001&	0.0535&	0.0110&	4.8452&	0.0030\\
&& 200 & -0.0002&	0.0123&	-0.0168&	0.0002&	0.0586&	0.0122&	4.7931&	0.0036\\
&& 400 & 0.0002&	0.0145&	0.0147&	0.0002&	0.0669&	0.0145&	4.6176&	0.0047 \\ \hline
\multirow{10}{*}{0.15} &\multirow{1}{*}{500}  & 50 & -0.0011&	0.0621&	-0.0171&	0.0039&	0.1953&	0.0611&	3.1950&	0.0419\\ \cline{2-11}
&\multirow{2}{*}{1000} & 50  &-0.0019&	0.0362&	-0.0519&	0.0013&	0.1204&	0.0362&	3.3253&	0.0158\\
&& 100	&-0.0013&	0.0360&	-0.0362&	0.0013&	0.1279&	0.0359&	3.5628&	0.0176\\\cline{2-11}
&\multirow{3}{*}{2000} & 50 &0.0007&	0.0216&	0.0323&	0.0005&	0.0802&	0.0216&	3.7125&	0.0069\\
&& 100	&-0.0003&	0.0214&	-0.0134&	0.0005&	0.0824&	0.0212&	3.8822&	0.0072\\
&& 200	&0.0002&	0.0256&	0.0074&	0.0007&	0.0920&	0.0254&	3.6201&	0.0091\\\cline{2-11}
&\multirow{5}{*}{4000} & 50 &-0.0002&	0.0127&	-0.0150&	0.0002&	0.0509&	0.0127&	4.0062&	0.0027\\
&& 100	&-0.0007&	0.0127&	-0.0562&	0.0002&	0.0520&	0.0126&	4.1223&	0.0029\\
&& 200	&0.0002&	0.0145&	0.0166&	0.0002&	0.0580&	0.0145&	4.0061&	0.0036\\
&& 400	&-0.0008&	0.0165&	-0.0502&	0.0003&	0.0647&	0.0165&	3.9222&	0.0045\\ \hline
\multirow{10}{*}{0.3} &\multirow{1}{*}{500}  & 50&  -0.0004&	0.0719&	-0.0049&	0.0052&	0.1918&	0.0702&	2.7333&	0.0417\\ \cline{2-11}
&\multirow{2}{*}{1000} & 50  &-0.0015&	0.0447&	-0.0345&	0.0020&	0.1185&	0.0445&	2.6648&	0.0160\\
&& 100	&0.0014&	0.0458&	0.0312&	0.0021&	0.1281&	0.0455&	2.8141&	0.0185\\\cline{2-11}
&\multirow{3}{*}{2000} & 50 &-0.0020&	0.0268&	-0.0762&	0.0007&	0.0764&	0.0265&	2.8783&	0.0065\\
&& 100	&0.0026&	0.0294&	0.0891&	0.0009&	0.0841&	0.0294&	2.8630&	0.0079\\
&& 100	&-0.0012&	0.0301&	-0.0390&	0.0009&	0.0893&	0.0300&	2.9778&	0.0089\\\cline{2-11}
&\multirow{5}{*}{4000} & 50 &-0.0013&	0.0175&	-0.0728&	0.0003&	0.0492&	0.0176&	2.7984&	0.0027\\
&& 100	&-0.0002&	0.0184&	-0.0113&	0.0003&	0.0518&	0.0184&	2.8129&	0.0030\\
&& 100	&0.0001&	0.0194&	0.0037&	0.0004&	0.0572&	0.0194&	2.9474&	0.0036\\
&& 100	&-0.0003&	0.0212&	-0.0127&	0.0004&	0.0644&	0.0212&	3.0387&	0.0046\\ \hline
\end{tabular}
\end{center}
\end{table}

\end{document}